\newcolumntype{s}{>{\columncolor[HTML]{FE6F5E}} c}
\newcolumntype{t}{>{\columncolor[HTML]{5D8AA8}} c}
\definecolor{LightRed}{rgb}{1,0.85,1}
\definecolor{LightBlue}{rgb}{0.8,0.9,1}
\definecolor{LightGreen}{rgb}{0.8,1,0.8}
\newtheorem{theorem}{Theorem}
\newtheorem{lemma}{Lemma}
\theoremstyle{definition}
\newtheorem{definition}{Definition}
\newtheorem{criterion}{Criterion}
\newcommand{\diag}{\text{diag}}
\newcommand{\bb}[1]{\mathbb{#1}}
\newcommand{\mds}[1]{\mathds{#1}}
\newcommand{\abs}[1]{\left|#1\right|}
\newcommand{\norm}[1]{\left|\left|#1\right|\right|}
\newcommand{\defeq}{\triangleq}
\newcommand{\Conv}[1]{%
  \mathop{\scalebox{1.5}{\raisebox{-0.6ex}{\makebox[0.5\width][r]{$\circledast $}}}}\limits^{#1}_{i=0}
}
\begin{document}

\title{Sparse Array Design via Fractal Geometries}

\author{Regev~Cohen,~\IEEEmembership{Graduate Student,~IEEE,}
        Yonina~C.~Eldar,~\IEEEmembership{Fellow,~IEEE}
\thanks{R. Cohen (e-mail:  regev.cohen@gmail.com) is with the Department of Electrical Engineering, Technion-Israel Institute of Technology, Haifa 32000, Israel. Y. C. Eldar (e-mail: yonina.eldar@weizmann.ac.il) is with the Faculty of Math and Computer Science,  Weizmann Institute of Science, Rehovot, Israel. This project has received funding from the
European Union’s Horizon 2020 research and innovation program under grant
No. 646804-ERC-COG-BNYQ, and from the Israel Science Foundation under
grant No. 0100101.}}

\maketitle

\begin{abstract}
Sparse sensor arrays have attracted considerable attention in various fields such as radar, array processing, ultrasound imaging and communications. 
In the context of correlation-based processing, such arrays enable to resolve more uncorrelated sources than physical sensors. This property of sparse arrays stems from the size of their difference coarrays, defined as the differences of element locations. Thus, the design of sparse arrays with large difference coarrays is of great interest. In addition, other array properties such as symmetry, robustness and array economy are important in different applications. Numerous studies have proposed diverse sparse geometries, focusing on certain properties while lacking others. Incorporating multiple properties into the design task leads to combinatorial problems which are generally NP-hard. For small arrays these optimization problems can be solved by brute force, however, in large scale they become intractable. In this paper, we propose a scalable systematic way to design large sparse arrays considering multiple properties. To that end, we introduce a fractal array design in which a generator array is recursively expanded according to its difference coarray. Our main result states that for an appropriate choice of the generator such fractal arrays exhibit large difference coarrays. Furthermore, we show that the fractal arrays inherit their properties from their generators. Thus, a small generator can be optimized according to desired requirements and then expanded to create a fractal array which meets the same criteria. This approach paves the way to efficient design of large arrays of hundreds or thousands of elements with specific properties.         
\end{abstract}

\begin{IEEEkeywords}
Sparse arrays, fractal geometry, difference coarray, array design.
\end{IEEEkeywords}

\section{Introduction}
\label{sec:intro}
\IEEEPARstart{S}{ensor} array design plays a key role in various fields such as radar \cite{pal2010nested},
radio \cite{bracewell1962radio}, communications \cite{van2002optimum} and ultrasound imaging \cite{cohen2018sparse,cohen2017sparse,cohen2018optimized}. In particular, two major applications of array processing \cite{tan2014direction,eldar2007competitive,eldar2006expected} are direction-of-arrival (DOA) estimation and beamforming used for detecting sources impinging on an array. Such applications often utilize sparse arrays, namely sensor arrays with non-uniform element spacing, since under certain conditions they allow to identify more uncorrelated sources than physical sensors \cite{qiao2019guaranteed, koochakzadeh2018fundamental, qiao2017maximum}. 

Sparse arrays include the well-known minimum redundancy arrays (MRA) \cite{moffet1968minimum}, minimum holes arrays (MHA), nested arrays (NA) \cite{pal2010nested} and coprime arrays (CP) \cite{pal2011coprime}, to name just a few. Such arrays enable detection of $\mathcal{O}(N^2)$ sources using $N$ elements, unlike uniform linear arrays (ULAs) that can resolve $\mathcal{O}(N)$ targets. This property of increased degrees-of-freedom (DOF) relies on correlation-based processing and arises from the size of the difference coarray, defined as the pair-wise sensor separation. A large difference coarray increases resolution \cite{moffet1968minimum,pal2010nested,pal2011coprime} and the number of resolvable sources
\cite{pal2010nested,pal2011coprime}.
Hence, the size of the difference set is an important metric in designing sparse arrays.       

Other array properties are also important in specific applications. For example, typically it is required that the sensor locations be expressed in closed-form to enable simple and scalable array constructions. Array symmetry is also often favorable as it reduces complexity \cite{haupt1994thinned,friedlander1991direction} and improves performance \cite{xu1992detection,xu2006deflation,ye2007doa}. A contiguous difference coarray facilitates the use of standard DOA recovery algorithms \cite{pal2010nested}. The array weight function and beampattern govern the array performance, therefore, it is convenient if they can be expressed in simple forms that allow analysis and optimization. Since electromagnetic element interaction may lead to adverse effects on the beampattern, an array with low mutual coupling \cite{liu2016super, boudaher2015doa}
is beneficial. To reduce power and cost, the array should be economic where all sensors are essential \cite{liu2017maximally}. Conversely, elements may malfunction, in which case redundancy increases the array robustness to sensor failures \cite{liu2018robust,liu2018compare,liu2018optimizing}.  

The introduction of nested arrays \cite{pal2010nested} and coprime arrays \cite{pal2011coprime} has sparked great interest in non-uniform arrays, leading to  
numerous studies proposing diverse sparse configurations. The authors of \cite{liu2016super, liu2016superII} introduced variants of nested arrays, called super nested arrays (SNA), which redistribute the elements of the dense ULA part of the nested array to obtain reduced mutual coupling.
In \cite{liu2017augmented}, augmented nested arrays (ANAs) are created by splitting the dense ULA of a nested array into two or four parts that can be relocated to two sides of the sparse ULA of a nested array. This leads to increased DOF and reduced mutual coupling compared with nested and super nested arrays. To allow robustness to sensor failures, robust MRAs (RMRAs) are presented in \cite{liu2018optimizing} where the sensor locations are given by a combinatorial problem in which the array fragility \cite{liu2018robust} is constrained. Alternative designs include utilizing array motion \cite{qin2019doa} to fill in the holes in the difference coarray of a given sparse array (e.g. coprime), and array configurations based on the maximum inter-element spacing constraint (MISC) \cite{zheng2019misc} that achieve more DOF than nested arrays and ANAs while exhibiting less mutual coupling than super nested arrays.   

In addition to the above, several variants of coprime arrays have been introduced. A generalization of coprime arrays named coprime array with displaced subarrays
(CADiS) is developed in \cite{qin2015generalized}. CADiS is built by compressing the inter-element
spacing of one subarray of the coprime array while displacing the other subarray, yielding an array with higher number of unique lags in the difference coarray and reduced mutual coupling. A thinned coprime array (TCA) is presented in \cite{raza2017thinned} where some of the sensors of an extended coprime array \cite{pal2011coprime} are removed without affecting the aperture and the difference coarray, resulting in lower mutual coupling. Complementary coprime arrays (CCP) are derived in \cite{wang2019hole} where a dense ULA is added to an extended coprime array to ensure contiguous difference coarray. In \cite{zheng2019extended}, part of the sensors in extended coprime arrays and CADiS are repositioned to create sliding extended coprime arrays (SECAs) and relocating extended coprime arrays (RECAs) receptively, which offer increased DOF and reduced mutual coupling compared with the original arrays.

Most existing array configurations focus on certain properties while lacking or being indifferent to others. For example, MRA and MHA yield large difference coarrays but are not expressed in closed-form. 
Nested and coprime arrays have simple forms but the former suffers from high mutual coupling while the latter exhibit holes in the difference coarray. Super nested arrays and ANAs enjoy reduced mutual coupling but are not robust to sensor failures. TCA offers increased DOF compared to coprime arrays but lacks a contiguous difference coarray. In contrast, complementary coprime arrays and MISC demonstrate contiguous difference coarrays but are susceptible to sensor failures. Moreover, none of the above are symmetric. 

There exists a broad range of applications which utilize sparse sensor arrays, each with different requirements. Hence, the array design must consider multiple properties and the appropriate tradeoffs between them, depending on the specific setting. Moreover, as technology advances, applications such as massive multiple-input multiple-output (MIMO) communications utilize an increasing number of sensors, expected to reach several thousands in the near future \cite{bjornson2019massive}. In such settings, the array construction has to be simple and efficient to allow scalability. The design process can often be formulated as an optimization problem which incorporates all the required array specifications. However, unfortunately, these design problems are combinatorial in nature, hence, they are intractable in large scale. 
Consequently, the development of a systematic and scalable approach to design large sparse arrays with multiple desired properties is of increasing importance.

The main contribution of this paper is in introducing a fractal design approach for sparse arrays which is scalable and considers multiple array properties. Fractal arrays
\cite{puente1996fractal,werner1999fractal,werner2003overview,feder2013fractals,falconer2004fractal} are geometrical structures which display an inherent self-similarity over different scales, and hence are used in the design of radiating systems to allow mutliwavelength operation \cite{puente1996fractal}. The construction of fractal structures is performed by recursively scaling a basic array, known as the generator. Here, we derive a special type of fractal arrays in which the generator is taken to be a sparse array and the scaling/translation factor is directly determined by the generator's difference coarray. We prove that an appropriate choice of the generator leads to fractal arrays with sizable difference coarrays. We then study the properties of the resultant fractal arrays, showing that they inherit their properties from the generators. In particular, a sparse fractal array exhibits the same coupling leakage as the generator, similar increased DOF, and it is at least as robust as the generator. 

Our proposed framework allows to extend any known sparse configuration to a large array while preserving its properties. It can be seen as a generalization of Cantor arrays that achieves increased DOF. Furthermore, a small sparse array can be designed and optimized according to given requirements, and then recursively expanded to generate an arbitrarily large array which meets the same design criteria. Thus, we establish a simple systematic approach for large sparse array design which incorporates multiple favorable properties. 

The paper is organized as follows. Section\,\ref{sec:preliminaries} reviews preliminaries of sparse arrays, including design criteria, and formulates the problem. We introduce our proposed fractal arrays in Section\,\ref{sec:fractal} and study their properties in Section\,\ref{sec:properties}. Section\,\ref{sec:experiments} provides numerical experiments of large array designs and performance analysis. Finally, Section\,\ref{sec:conclude} concludes the paper.    

\section{Review of Sparse Arrays}
\label{sec:preliminaries}

\subsection{Signal Model}
\label{subsec:model}

Consider $K$ narrowband sources with carrier wavelength $\lambda$ impinging on an $N$ element linear array.  The array sensors are located at $d_n=n\lambda/2$ where $n$ belongs to an integer set $\bb{G}$ ($|\bb{G}|=N$). For brevity, we refer to such an array as a linear array $\bb{G}$. Let $s_i\in\mds{C}$ and $\theta_i\in[-\pi/2,\pi/2]$ be the complex amplitude and DOA of the $i$th source respectively. Neglecting mutual coupling \cite{svantesson2000mutual}, the array measurement vector $\bf x$ can be expressed as
\begin{equation}
\label{eq:model}
{\bf x}=\sum_{i=1}^K s_i{\bf a}(\theta_i)+{\bf w}={\bf As}+{\bf w}\in\mds{C}^N,
\end{equation}
where ${\bf s}=[s_1\;s_2\cdots s_K]^T\in\mds{C}^K$, ${\bf A}=[{\bf a}(\theta_1)\;{\bf a}(\theta_2)\cdots{\bf a}(\theta_K)]\in\mds{C}^{N\times K}$ is the array manifold and ${\bf a}(\theta)\in\mds{C}^{N\times 1}$ is a steering vector at direction $\theta$ whose entries are $e^{j\pi\sin(\theta)n}$ ($n\in\bb{G}$). Here $\bf w$ represents additive white noise. We assume the source vector $\bf s$ and the noise $\bf w$ are zero-mean and uncorrelated, so that
\begin{itemize}
\item $\mds{E}[{\bf s}]=0,\;\mds{E}[{\bf w}]=0$,
\item $\mds{E}[{\bf ws}^H]=0$,
\item $\mds{E}[{\bf ss}^H]=\diag(p_1,p_2,...,p_K),\; \mds{E}[{\bf ww}^H]=p_w{\bf I}_N$,
\end{itemize}
where $p_i$ and $p_w$ are the power of the $i$th source and the noise respectively. We denote by ${\bf I}_N$ the $N\times N$ identity matrix.

The covariance of $\bf x$ can be written as
\begin{equation}
\label{eq:acmatrix}
{\bf R_x}=\sum_{i=1}^K p_i{\bf a}(\theta_i){\bf a}(\theta_i)^H+p_w{\bf I}_N.
\end{equation}
Vectorizing $\bf R_x$ to a vector $\bf r_x$ and averaging duplicate entries we obtain
\begin{equation}
\label{eq:acmodel}
{\bf r_x}=\sum_{i=1}^K p_i{\bf b}(\theta_i)+p_n{\boldsymbol\delta}={\bf Bp}+p_w{\boldsymbol\delta}\in\mds{C}^{|\bb{D}|},
\end{equation}
where $\bb{D}$ is the difference coarray defined below, ${\boldsymbol\delta}\in\mds{C}^{|\bb{D}|}$ is the Kronecker delta and the steering vectors are ${\bf b}(\theta)={\bf a}^\ast(\theta)\odot{\bf a}(\theta)$ where superscript $^\ast$ denotes conjugation and $\odot$ is the Khatri-Rao product \cite{van2002optimum,ma2009doa}.
The vector $\bf r_x$ can be seen as the signal received by a virtual array whose manifold is ${\bf B}=[{\bf b}(\theta_1)\;{\bf b}(\theta_2)\cdots{\bf b}(\theta_K)]\in\mds{C}^{\abs{\bb{D}}\times K}$ and its sensor locations are determined by the difference coarray $\bb{D}$.
\begin{definition}[Difference Coarray]
\label{def:diff}
Consider a sensor array $\bb{G}$. The difference set of $\bb{G}$ is given by 
\begin{equation*}
\bb{D}\defeq\{d\,|\;n_1-n_2=d,\,n_1,n_2\in\bb{G}\}.
\end{equation*}
The \textit{difference coarray} of an array $\bb{G}$ is the linear array $\bb{D}$.
The DOF of a linear array $\bb{G}$ is the cardinality of its difference coarray $\bb{D}$.
\end{definition}
\noindent The performance of correlation-based estimators is governed by the DOF of the sparse array. When the difference coarray is larger than the physical array, we can recover more uncorrelated targets than sensors or alternatively increase the angular resolution of DOA estimation \cite{moffet1968minimum,pal2012nested,pal2011coprime}.

\subsection{Difference Coarray Criteria}
Our goal is to generate arrays with increased DOF. To that end, we outline several popular criteria in sparse array design. We begin with a few  definitions related to sparse arrays.      

\begin{definition}[Central  ULA]
\label{def:center}
Consider a sensor array $\bb{G}$ with a difference coarray $\bb{D}$. Given a non-negative integer $m$, let $\bb{U}_m\defeq\{-m,...,-1,0,1,...,m\}$.
The \textit{central ULA} of $\bb{D}$ is the ULA defined as
\begin{equation*}
\bb{U}\defeq\; \underset{\bb{U}_m\subseteq\bb{D}}{\arg\max}\; |\bb{U}_m|.
\end{equation*}
The central ULA $\bb{U}$ is the maximum contiguous ULA that includes the 0th element in the difference coarray. 
\end{definition}

\begin{definition}[Hole-Free/Contiguous Difference Coarray]
Consider a sensor array $\bb{G}$ whose difference coarray is $\bb{D}$ and denote the central ULA of $\bb{D}$ by $\bb{U}$. The difference coarray $\bb{D}$ is said to be hole-free (i.e. contiguous) if $\bb{D}=\bb{U}$.
\end{definition}

\noindent Equipped with the definitions above, we state the following criteria for sparse array design \cite{liu2017maximally}:
\begin{criterion}[Closed-form]
For scalability, the sensor locations should be expressed in closed-form.
\label{criterion:closedform}
\end{criterion}

\begin{criterion}[Hole-free difference coarray]
For subspace-based DOA estimation
methods, e.g. MUSIC \cite{schmidt1986multiple} and ESPRIT \cite{roy1989esprit}, the number of
resolvable sources is determined by the cardinality of the central ULA. Hence, to exploit the difference coarray to its fullest extent, we require it to be hole-free \cite{liu2017maximally}.
\label{criterion:holefree}
\end{criterion}

\begin{criterion}[Large difference coarray]
To achieve increased DOF with respect to the number of sensors, the number of virtual elements (unique lags) of the difference coarray should satisfy $\abs{\bb{D}}=\mathcal{O}(\abs{\bb{G}}^2)$.    
\label{criterion:largearray}
\end{criterion}

Interestingly, most known array geometries do not fulfill Criteria\,\ref{criterion:closedform} to \ref{criterion:largearray} concurrently. MRAs \cite{moffet1968minimum}, RMRAs \cite{liu2018optimizing} and MHAs do not have a closed-form expression. ULAs and Cantor arrays \cite{liu2017maximally} exhibit difference coarrays whose size is $\mathcal{O}(N)$ and $\mathcal{O}(N^{\log_23})$, respectively. The difference coarray of a co-prime array is not hole-free, hence, interpolation may be required which increases complexity \cite{tuncer2007direction,liu2016coprime,qiao2017unified}.   
Note that nested arrays do meet the discussed requirements, however, they lack other important array properties. For example, they contain a dense ULA which results in high mutual coupling. To circumvent this limitation, super-nested arrays \cite{liu2016super} were introduced. However, the latter are expressed in a closed yet complicated form. Moreover, both nested arrays and super-nested arrays are not symmetric and are sensitive to sensor failures \cite{liu2018compare}. Symmetric nested arrays are robust to sensor failures but suffer from high mutual coupling \cite{liu2018optimizing}.     

\subsection{Problem Formulation}
To fully exploit the benefits arising from the difference coarray, the array design should satisfy Criteria 
\ref{criterion:closedform} to \ref{criterion:largearray}. However, the majority of existing array configurations do not meet them simultaneously. Furthermore, application-specific requirements should be considered, including properties such as symmetry, low mutual coupling, robustness to sensor failures, and more. 

For small scale, the design tasks are formulated as optimization problems which can be solved by brute force methods. However, these problems are NP-hard in general, hence, for moderate and large scale they become intractable. Therefore, an efficient scalable approach for array design is required.     

To address this issue, we propose a fractal array design in which an array, called a generator, is recursively enlarged based on its difference coarray. We study the proposed array with respect to Criteria \ref{criterion:closedform}-\ref{criterion:largearray} and other important properties. We show that the resultant fractal arrays enjoy the same properties as the generator. Any array can be used as a generator, thus, extending known sparse array configurations. Moreover, a small-scale generator with required properties can be designed and then expanded by the proposed scheme to construct large fractal arrays which share the same properties.

We emphasize that our goal is not to present a specific array configuration which is optimal or superior to previously proposed arrays in terms of a specific property. Our aim is to offer a flexible framework for constructing large sparse arrays which exhibit multiple properties with the appropriate balance between them, determined by the specific application.      

\section{Sparse Fractal Array Design}
\label{sec:fractal}
Fractal arrays possess an
inherent self-similarity in their geometrical structure and have been used over the years in the design of radiating  systems, allowing multiwavelength operation. However, so far, fractals have not been studied extensively in the context of sparse array design.

In this section we present our main approach to designing sparse fractal arrays with increased degrees of freedom. To that end, we first briefly describe well-known fractals called Cantor arrays which exhibit a relatively small number of DOF \cite{liu2017maximally}. Then, we introduce a simple array design in which we recursively expand a generator array in a fractal fashion, allowing to construct arbitrarily large arrays that satisfy Criteria \ref{criterion:closedform} to \ref{criterion:largearray}. In addition, the proposed scheme can be seen as a generalization of Cantor arrays, leading to sparse fractal arrays with increased DOF.  

For simplicity, we assume henceforth that the leftmost element of an arbitrary array is located at 0. Otherwise, the array can be translated to fulfill this assumption. 

\subsection{Cantor Arrays}
\label{subsec:cantor}
Given two integer sets $\bb{A}$ and $\bb{B}$, we define their \textit{sum set} as
\begin{equation*}
\bb{A}+\bb{B}\defeq\{a+b\;|\;a\in\bb{A},\,b\in\bb{B}\}.
\end{equation*}
Cantor arrays \cite{feder2013fractals,falconer2004fractal} are fractal arrays defined recursively as follows:
\begin{align}
\begin{split}
&\bb{C}_0\defeq\{0\}, \\
&\bb{C}_{r+1}\defeq\bb{C}_r\cup(\bb{C}_r+3^r),\;r\in\mds{N},
\end{split}
\label{eq:Cantor}
\end{align}
where $\cup$ denotes the union operator. Note that the array definition (\ref{eq:Cantor}) is equivalent to the definition given in \cite{liu2017maximally}.
Cantor arrays are symmetric and $\bb{C}_r$ has $N=2^r$ physical elements.
See examples of Cantor arrays in Fig.\,\ref{fig:cantor}.

\begin{figure}
 \centering
 \includegraphics[trim={3cm 7cm 3cm 6.5cm},clip,height = 2cm, width = 0.8\linewidth]{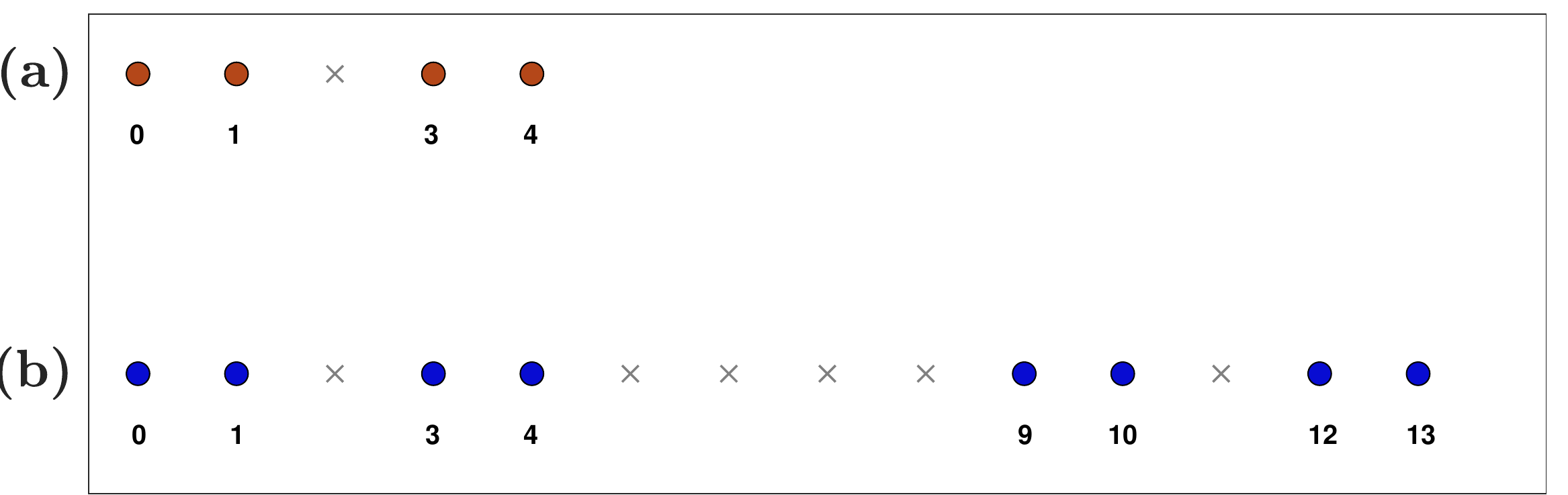}
\caption{Cantor arrays with (a) $r=2$ and (b) $r=3$.}
  \label{fig:cantor}
 \end{figure}

As proven in \cite{liu2017maximally}, $\bb{C}_r$ has a hole-free difference coarray $\bb{D}_r$ with size $\abs{\bb{D}_r}=3^r$. Hence, Cantor arrays satisfy Criteria  \ref{criterion:closedform} to \ref{criterion:largearray} along with symmetry. However,
\begin{equation*}
    \abs{\bb{D}_r}=3^r=(2^{\log_23})^r=(2^r)^{\log_23}=N^{\log_23}, 
\end{equation*}
i.e., the difference coarray has size $\mathcal{O}(N^{\log_23})\approx \mathcal{O}(N^{1.585})$. This asymptotic result is inferior to the performance
obtained by other sparse arrays such as nested arrays and MRAs that have size $\mathcal{O}(N^2)$ difference coarray. In addition, the number of sensors $N$ is constrained to be a power of two.

To overcome these limitations, we next present a fractal scheme in which the generator is taken to be a sparse array. We show that Cantor arrays are a  special case of the proposed arrays and that an appropriate choice of the generator leads to fractal arrays with increased DOF.  

\subsection{Sparse Fractal Arrays}
\label{subsec:fractal}
Here we introduce a fractal array design in which a generator array is expanded in a simple recursive fashion. We study the properties of the resultant arrays and prove they fulfill Criteria \ref{criterion:closedform} to \ref{criterion:largearray} when the generator satisfies them.     

Consider an $L$ element linear array whose sensor locations correspond to an integer set $\bb{G}\;(\abs{\bb{G}}=L)$. Denote the difference coarray of $\bb{G}$ by $\bb{D}$ and  let $\bb{U}$ be the central ULA of $\bb{D}$. We propose the following fractal array definition:
\begin{align}
\begin{split}
&\bb{F}_0 \defeq \{0\}, \\
&\bb{F}_{r+1} \defeq \bigcup_{n\in\bb{G}} (\bb{F}_r+n|\bb{U}|^r),\quad r\in\mds{N},
\end{split}
\label{eq:fractal}
\end{align}
where $|\bb{U}|$ denotes the cardinality of the set $\bb{U}$. Note that $\bb{F}_1$ is exactly the array $\bb{G}$, known as the \textit{generator} in fractal terminology \cite{werner1999fractal,werner2003overview,feder2013fractals,falconer2004fractal}. For brevity, we define the array translation factor as $M\defeq\abs{\bb{U}}$ and we refer to $r$ as the array order. Definition (\ref{eq:fractal}) is similar to the definition given in \cite{cohen2019fractal}, but here we do not assume the generator has a hole-free difference coarray and the translation factor is determined by the central ULA.     

The fractal array $\bb{F}_r$ consists of replicas of $\bb{F}_{r-1}$ translated according to the element locations of $\bb{G}$ and the cardinality of $\bb{U}$. This process is repeated a finite number of times, given by the array order $r$, to create a fractal array composed of copies of the generator where the number of sensors is $L^r$ at most. When $\bb{G}=[0\;1]$, the array definition (\ref{eq:fractal}) reduces to (\ref{eq:Cantor}), therefore, the proposed arrays can be seen as a generalization of Cantor arrays. Unlike previous related work \cite{puente1996fractal,werner1999fractal,werner2003overview,feder2013fractals,falconer2004fractal} where the translation factor can be an arbitrary natural number, here, it is directly determined by the generator's difference coarray. 
Fig.\,\ref{fig:fractals} illustrates fractal arrays created by (\ref{eq:fractal}) using different generators.

\begin{figure*}[ht]
 \centering
 \includegraphics[trim={0cm 5cm 0cm 4.5cm},clip,height = 6cm, width = 0.9\linewidth]{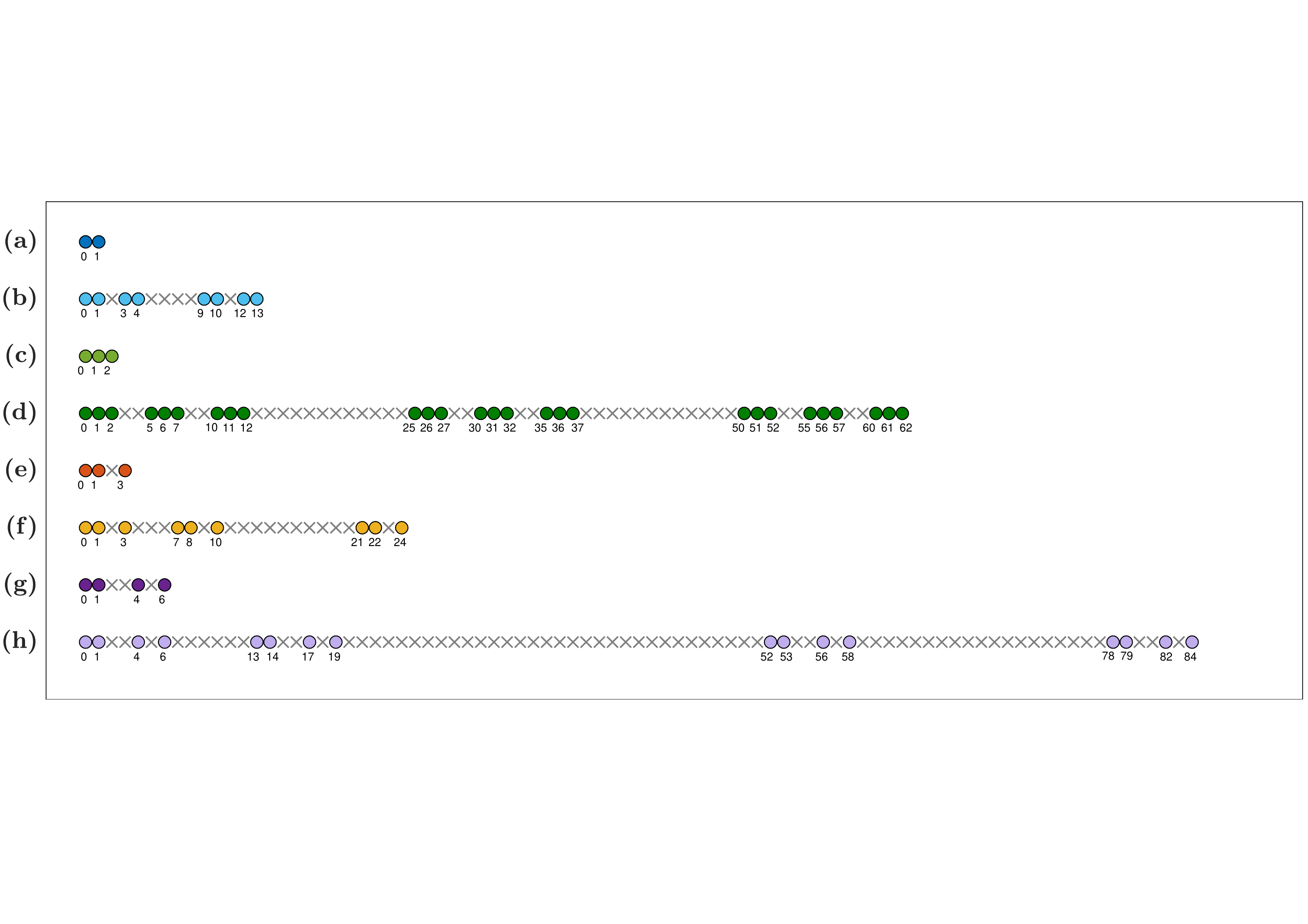}
\caption{\textbf{Fractal Arrays.} Different generator arrays (a) Cantor array, (c) ULA, (e) MRA, (g) MHA and their respective fractal extensions with (b) $r=3$, (d) $r=3$, (f) $r=2$, (h) $r=2$.}
  \label{fig:fractals}
 \end{figure*}

Next, we show that an appropriate choice of the generator leads to fractal arrays which meet Criteria \ref{criterion:closedform} to \ref{criterion:largearray}.     
First, the suggested arrays are expressed in closed-form, hence, they automatically satisfy Criterion\,\ref{criterion:closedform}. The result related to Criteria\,\ref{criterion:holefree} is stated in the following theorem.
 
\begin{theorem}
Consider an array $\bb{G}$ whose difference coarray is $\bb{D}$. Let $\bb{F}_r$ be the fractal array created according to (\ref{eq:fractal}) with $\bb{G}$ for some  fixed $r$. The difference coarray $\bb{D}_r$ of $\bb{F}_r$ is a hole-free array if $\bb{D}$ is hole-free. Moreover, we have 
\begin{equation*}
\bb{D}_r=\left[-\frac{M^r-1}{2},\frac{M^r-1}{2}\right],
\end{equation*}
where $M=\abs{\bb{U}}=\abs{\bb{D}}$.
\label{theorem:holefree}
\end{theorem}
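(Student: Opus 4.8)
The plan is to proceed by induction on the array order $r$, the engine being a simple identity for the difference set of a union of translates. For the base case $r=1$ we have $\bb{F}_1=\bb{G}$, so $\bb{D}_1=\bb{D}$; since $\bb{D}$ is hole-free, $\bb{U}=\bb{D}$ and hence $M=\abs{\bb{U}}=\abs{\bb{D}}$. Moreover any difference set is symmetric about $0$ and contains $0$, so a hole-free $\bb{D}$ is forced to equal $\bb{U}_{(M-1)/2}=[-(M-1)/2,(M-1)/2]$ with $M$ odd; in particular every quantity $(M^r-1)/2$ appearing in the statement is an integer. This gives the claim for $r=1$.

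For the inductive step, assume $\bb{D}_r=[-(M^r-1)/2,(M^r-1)/2]$. By (\ref{eq:fractal}) every element of $\bb{F}_{r+1}$ has the form $f+nM^r$ with $f\in\bb{F}_r$ and $n\in\bb{G}$, so the difference of two elements of $\bb{F}_{r+1}$ is $(f_1-f_2)+(n_1-n_2)M^r$. As $f_1,f_2$ vary independently over $\bb{F}_r$ the quantity $f_1-f_2$ sweeps out exactly $\bb{D}_r$, and as $n_1,n_2$ vary independently over $\bb{G}$ the quantity $n_1-n_2$ sweeps out exactly $\bb{D}$; hence
\begin{equation*}
\bb{D}_{r+1}=\bb{D}_r+M^r\bb{D}=\{\,d+M^re\;:\;d\in\bb{D}_r,\ e\in\bb{D}\,\}.
\end{equation*}
The nontrivial direction here is $\supseteq$, which holds because one can choose separate witness pairs in $\bb{F}_r$ and in $\bb{G}$ realizing any prescribed $d$ and $e$. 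It then remains to evaluate this sum set.

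Using the induction hypothesis together with the base-case description of $\bb{D}$, for each fixed $e\in[-(M-1)/2,(M-1)/2]$ the block $M^re+\bb{D}_r$ is the run of $M^r$ consecutive integers centered at $M^re$. The right end of the block for $e$ is $M^re+(M^r-1)/2$, while the left end of the block for $e+1$ is $M^r(e+1)-(M^r-1)/2=M^re+(M^r-1)/2+1$, so consecutive blocks abut with neither gap nor overlap. Therefore the union over all admissible $e$ is a single contiguous run, from $-M^r(M-1)/2-(M^r-1)/2$ to $M^r(M-1)/2+(M^r-1)/2=(M^{r+1}-1)/2$, i.e. $\bb{D}_{r+1}=[-(M^{r+1}-1)/2,(M^{r+1}-1)/2]$, which is hole-free. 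This closes the induction.

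I expect the tiling step in the last paragraph to be the only place where care is genuinely needed: it is precisely the conjunction of $\abs{\bb{D}_r}=M^r$ (from the induction hypothesis, which itself rests on $\bb{D}$ being contiguous) and the contiguity of $\bb{D}=\bb{U}$ that makes the translated blocks interlock perfectly. If $\bb{D}$ had holes, the blocks would either drift apart or, more insidiously, be miscounted so that the inductive description breaks — so the hole-free hypothesis enters in an essential way, not merely for convenience. Everything else is routine arithmetic and bookkeeping about sum sets of centered integer intervals.
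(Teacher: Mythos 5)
Your proof is correct and follows essentially the same route as the paper's: induction on $r$, the identity $\bb{D}_{r+1}=\bb{D}_r+M^r\bb{D}$, and the observation that the $M$ translated copies of $\bb{D}_r$ tile a contiguous interval because $\abs{\bb{D}_r}=M^r$ and $\bb{D}$ is contiguous. The only difference is that you spell out the abutting-blocks computation and the oddness of $M$, which the paper leaves implicit.
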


\begin{proof}
We prove the theorem by induction.
\begin{itemize}
\item \textbf{Base} ($k=1$): In this case $\bb{F}_1=\bb{G}$. Hence,  $\bb{D}_1=\bb{D}$ which can be written as 
\begin{equation*}
\bb{D}_1=\bb{D}=\left[-\frac{M-1}{2},\frac{M-1}{2}\right],
\end{equation*}
where $M=\abs{\bb{D}}$ since $\bb{D}$ is hole free.
\item \textbf{Assumption} ($k=r$): $\bb{D}_r$ is a hole-free array given by 
\begin{equation*}
    \bb{D}_r=\left[-\frac{M^r-1}{2},\frac{M^r-1}{2}\right].
\end{equation*}
\item \textbf{Step} ($k=r+1$): 
By the definition of the difference coarray, we have
\begin{align*}
\bb{D}_{r+1}&\defeq \{k-l:\, k,l\in\bb{F}_{r+1}\} \\
&=\{s+uM^r-(t+vM^r):\, s,t\in\bb{F}_r,\,u,v\in\bb{G}\} \\
&=\{(s-t)+(u-v)M^r:\, s,t\in\bb{F}_r,\,u,v\in\bb{G}\} \\
&=\{m+nM^r:\, m\in\bb{D}_r,\,n\in\bb{D}\}.
\end{align*}
Since $\bb{D}$ and $\bb{D}_r$ are hole-free and satisfy $\abs{\bb{D}}=M,\,\abs{\bb{D}_r}=M^r$, we obtain that $\bb{D}_{r+1}$ consists of $M$ consecutive replicas of $\bb{D}_r$. Hence, $\bb{D}_{r+1}$ is hole-free and
\begin{equation*}
    \bb{D}_{r+1}=\left[-\frac{M^{r+1}-1}{2},\frac{M^{r+1}-1}{2}\right]. \qedhere
\end{equation*}
\end{itemize}
\end{proof}

Theorem\,\ref{theorem:holefree} demonstrates the importance of the difference coarray in our array definition. Choosing a generator whose difference coarray is hole-free, e.g. a ULA, leads to a fractal array that satisfies Criterion\,\ref{criterion:holefree} for any order $r$. For example, when the generator is a nested array the resultant fractal array has a contiguous difference coarray, while for a coprime generator array we get holes (see Fig.\,\ref{fig:holes}). Notice that when the generator has a hole-free difference coarray, it holds that $\abs{\bb{F}_r}=\abs{\bb{G}}^r$. 

We continue to the third criteria.

\begin{figure*}[ht]
 \centering
 \includegraphics[trim={0cm 1.5cm 0cm 2.5cm},clip,height = 8cm, width = 0.8\linewidth]{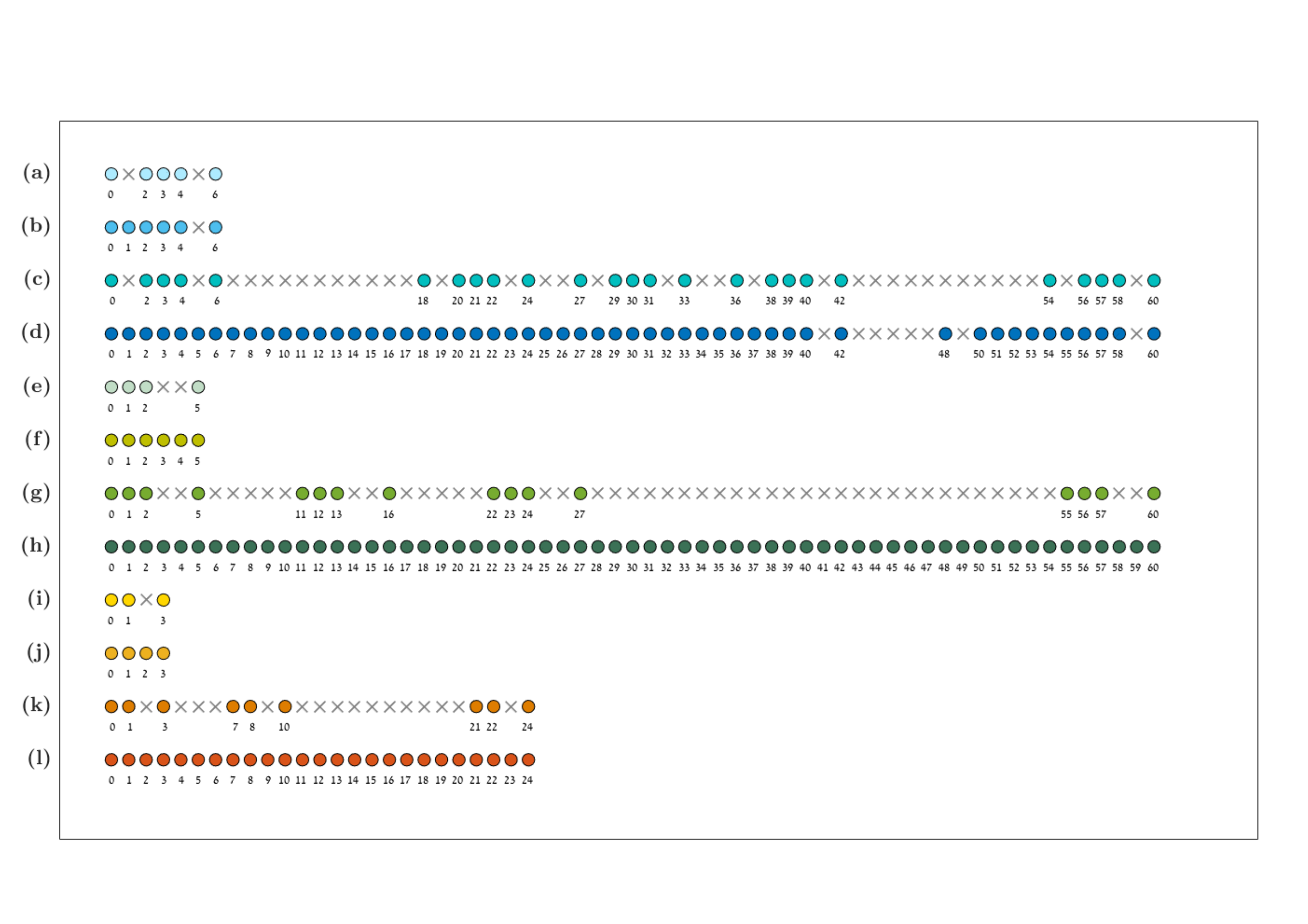}
\caption{\textbf{Difference Coarray.} Different generator arrays (a) expanded coprime array, (e) nested array, (i) MRA and the non-negative parts of their difference coarrays given by (b), (f) and (j) respectively. The fractal extensions for $r=2$ of the generators are shown in (c), (g) and (k), while the corresponding non-negative parts of their difference coarrays are (d), (h) and (l). }
  \label{fig:holes}
 \end{figure*}
 
\begin{theorem}
Consider an $L$ element array $\bb{G}$ whose difference coarray satisfies $M\defeq\abs{\bb{U}}=
\mathcal{O}(L^2)$. Let $\bb{F}_r$ be the fractal array created according to (\ref{eq:fractal}) with $\bb{G}$ for some  fixed $r$. Then, the difference coarray $\bb{D}_r$ of $\bb{F}_r$ satisfies
\begin{equation*}
\abs{\bb{D}_r}=\mathcal{O}(N^2),
\end{equation*}
where $N\leq L^r$ is the number of physical sensors in $\bb{F}_r$.
\label{theo:large}
\end{theorem}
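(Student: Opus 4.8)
The plan is to sandwich $|\bb{D}_r|$ between two bounds that are both proportional to $N^2$ for fixed $r$; as in Criterion~\ref{criterion:largearray}, the symbol $\mathcal{O}(N^2)$ is read here as ``of the order of $N^2$''. The upper bound is immediate: since $\bb{D}_r$ is the difference set of $\bb{F}_r$, we have $|\bb{D}_r|\le|\bb{F}_r|^2=N^2$. (A sharper aperture-based estimate $|\bb{D}_r|\le 2\,\text{aper}(\bb{F}_r)+1$ with $\text{aper}(\bb{F}_r)=\text{aper}(\bb{G})\,(M^r-1)/(M-1)$ is also available, but is not needed.) The real work is the matching lower bound.

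For the lower bound I would start from the coarray recursion derived in the proof of Theorem~\ref{theorem:holefree}, which uses no hole-free assumption: $\bb{D}_{r+1}=\{m+nM^r:\,m\in\bb{D}_r,\ n\in\bb{D}\}=\bb{D}_r+M^r\bb{D}$. Write $\bb{U}_r$ for the central ULA of $\bb{D}_r$. I claim $|\bb{U}_r|\ge M^r$ for every $r\ge1$, which already gives $|\bb{D}_r|\ge|\bb{U}_r|\ge M^r$. This is proved by induction. The base case is $\bb{U}_1=\bb{U}$, so $|\bb{U}_1|=M$. For the step, recall $\bb{U}=\{-(M-1)/2,\dots,(M-1)/2\}\subseteq\bb{D}$ and $\bb{U}_r=\{-(|\bb{U}_r|-1)/2,\dots,(|\bb{U}_r|-1)/2\}$, so $\bb{U}_r+M^r\bb{U}=\bigcup_{i=-(M-1)/2}^{(M-1)/2}(\bb{U}_r+iM^r)$ is a union of $M$ contiguous blocks of length $|\bb{U}_r|$ whose consecutive centers are $M^r$ apart. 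By the inductive hypothesis $|\bb{U}_r|\ge M^r$, so neighboring blocks abut or overlap and the union is a single ULA of size $|\bb{U}_r|+(M-1)M^r$ containing $0$. Being contained in $\bb{D}_{r+1}$, it forces $|\bb{U}_{r+1}|\ge|\bb{U}_r|+(M-1)M^r\ge M^r+(M-1)M^r=M^{r+1}$, closing the induction.

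Combining the two bounds: $M=\mathcal{O}(L^2)$ means $M\ge\beta L^2$ for a constant $\beta>0$, so $|\bb{D}_r|\ge M^r\ge\beta^rL^{2r}\ge\beta^rN^2$, using $N\le L^r$; together with $|\bb{D}_r|\le N^2$ this yields $\beta^rN^2\le|\bb{D}_r|\le N^2$. Since $r$ is fixed, $\beta^r$ is a constant and hence $|\bb{D}_r|$ is of the order of $N^2$, as claimed. When $\bb{D}$ is in addition hole-free, Theorem~\ref{theorem:holefree} turns the lower bound into the equality $|\bb{D}_r|=M^r$, and $N=L^r$.

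The step I expect to be the crux is the inductive claim $|\bb{U}_r|\ge M^r$: one must verify carefully that the $M$ dilated copies of $\bb{U}_r$ sitting inside $M^r\bb{U}$ concatenate into one gap-free ULA — which is exactly where the hypothesis $|\bb{U}_r|\ge M^r$ is consumed — and that nothing in this argument tacitly assumes $\bb{D}$ itself is hole-free. A minor but essential bookkeeping point is that ``$r$ fixed'' is precisely what licenses treating $\beta^r$ as a constant, so that the weak-looking inequality $|\bb{D}_r|\ge\beta^rN^2$ still certifies growth of order $N^2$.
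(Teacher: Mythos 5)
Your proof is correct and follows essentially the same route as the paper: an induction showing that the central ULA $\bb{U}_r$ of $\bb{D}_r$ contains a contiguous block of size at least $M^r$, obtained because the $M$ copies of $\bb{U}_r$ shifted by multiples of $M^r$ abut or overlap precisely when $|\bb{U}_r|\ge M^r$. Your write-up makes that abutting step slightly more explicit than the paper's (which routes it through the sets $\bb{T}_r,\bb{Y}_r,\bb{V}_r$) and adds the trivial upper bound $|\bb{D}_r|\le N^2$, but the key lemma and its use are the same.
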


\begin{proof}
Denote by $\bb{U}_r$ the central ULA of $\bb{D}_r$. We first prove by induction that
\begin{equation*}
    \left[-\frac{M^r-1}{2},\frac{M^r-1}{2}\right]\subseteq \bb{U}_r,
\end{equation*}
implying that $\abs{\bb{U}_r}=\mathcal{O}(M^r)$.
\begin{itemize}
\item \textbf{Base} ($k=1$): In this case $\bb{F}_1=\bb{G}$.  Therefore, $\bb{U}_1=\bb{U}$ which can be written as \begin{equation*}
\bb{U}_1=\bb{U}=\left[-\frac{M-1}{2},\frac{M-1}{2}\right],
\end{equation*}
where $M=\abs{\bb{U}}$ since $\bb{U}$ is a symmetric ULA by definition.
\item \textbf{Assumption} ($k=r$): Assume that
\begin{equation*}
    \left[-\frac{M^r-1}{2},\frac{M^r-1}{2}\right]\subseteq \bb{U}_r.
\end{equation*}
\item \textbf{Step} ($k=r+1$):
We define the following sets
\begin{align*}
    &\bb{T}_r\triangleq  \left[-\frac{M^r-1}{2},\frac{M^r-1}{2}\right], \\
    &\bb{Y}_r\triangleq \Big\{m+nM^r:\, m\in\bb{T}_r,\,n\in\bb{U}\Big\}, \\
    &\bb{V}_r\triangleq \{m+nM^r:\, m\in\bb{U}_r,\,n\in\bb{U}\}.
\end{align*}
Notice that we can write $\bb{Y}_r$ explicitly as 
\begin{equation*}
    \bb{Y}_r=\left[-\frac{M^{r+1}-1}{2},\frac{M^{r+1}-1}{2}\right].
\end{equation*}
Furthermore, it holds that $\bb{U}_r\subseteq \bb{V}_r$ and by definition of the central ULA we obtain that $\bb{U}_{r+1}$ includes $\bb{V}_r$, i.e., $\bb{V}_r\subseteq \bb{U}_{r+1}$. 
By the induction assumption, we have that $\bb{T}_r\subseteq\bb{U}_r$,  hence, $\bb{Y}_r\subseteq \bb{V}_r$ which in turn suggests that
\begin{equation*}
    \bb{Y}_r=\left[-\frac{M^{r+1}-1}{2},\frac{M^{r+1}-1}{2}\right]\subseteq \bb{U}_{r+1},
\end{equation*}
since $\bb{V}_r\subseteq \bb{U}_{r+1}$. Thus, we get that 
\begin{equation*}
    \abs{\bb{U}_{r+1}}\geq \Bigg|\left[-\frac{M^{r+1}-1}{2},\frac{M^{r+1}-1}{2}\right]\Bigg|=M^{r+1}.
\end{equation*}
\end{itemize}
Now, since $ \bb{U}_r \subseteq \bb{D}_r$ we have
\begin{equation*}
\abs{\bb{D}_r}\geq \abs{\bb{U}_r}\geq 2\left(\frac{M^r-1}{2}\right)+1=M^r,
\end{equation*}
i.e., $\abs{\bb{D}_r}=\mathcal{O}(M^r)$.
Finally, since $M=\mathcal{O}(L^{2})$ and $N\leq L^r$ we obtain that  $M^r=\mathcal{O}(L^{2r})$ and thus 
\begin{equation*}
\abs{\bb{D}_r}=\mathcal{O}(M^r)=\mathcal{O}(L^{2r})=\mathcal{O}(N^2),
\end{equation*}
completing the proof. \qedhere
\end{proof}

Theorem\,\ref{theo:large} implies that for a proper choice of the generator, e.g. a co-prime array, we obtain fractal arrays that fulfill Criterion\,\ref{criterion:largearray}. In particular, the size of the central ULA is $\abs{\bb{U}}=\mathcal{O}(N^2)$ where $N$ is the number of physical elements, as occurs in nested arrays and coprime arrays. Thus, the proposed fractal arrays are an improvement over Cantor arrays since they exhibit increased DOF and their number of sensors $N$ is not necessarily a power of two.

From the last two theorems, we can use a generator with a large contiguous difference coarray, such as a nested array (Fig.\,\ref{fig:holes}), to create an arbitrarily large array that satisfies Criteria\,\ref{criterion:closedform} to \ref{criterion:largearray}. In the following section, we show that similar results can be obtained for other significant array properties. 

\section{Sparse Fractal Array Properties}
\label{sec:properties}

As shown in the previous section, we can build fractal arrays that satisfy the major criteria of sparse array design. However, other known array geometries also meet these criteria, for instance, nested arrays. To emphasize the advantage of the proposed fractal arrays, we extend our study to other desired array properties \cite{liu2017maximally,svantesson1999direction,liu2016super,liu2018compare} which are important in diverse applications. Similar to Section\,\ref{subsec:fractal}, we show that these fractal array properties are induced by the generator.

\subsection{Symmetry}
Symmetric arrays are favorable in various applications ranging from DOA estimation \cite{ye2007doa} to ultrasound imaging \cite{cohen2018coba,cohen2018sparse,cohen2017sparse, chernyakova2018fourier}. Symmetry induces a special structure on the acquired signals which can be exploited to reduce the computational burden, aid in calibration and ultimately improve DOA estimation \cite{haupt1994thinned,friedlander1991direction,xu1992detection,xu2006deflation,ye2007doa,lin2006blind}. 

\begin{definition}[Reversed Array]
Consider a sensor array $\bb{G}$. The
reversed version of an array $\bb{G}$ is defined as 
\begin{equation*}
\hat{\bb{G}}\defeq\{\max(\bb{G})+\min(\bb{G})-n\,|\;n\in\bb{G}\}.
\end{equation*}
As we assume that $\min(\bb{G})=0$ for any array, the above definition reduces to
\begin{equation*}
\hat{\bb{G}}=\{\max(\bb{G})-n\,|\;n\in\bb{G}\}.
\end{equation*}
\end{definition}

\begin{definition}[Symmetric Array]
Consider a sensor array $\bb{G}$ and denote by $\hat{\bb{G}}$ its reversed array. We say that an array $\bb{G}$ is symmetric if $\bb{G}=\hat{\bb{G}}$.
\end{definition}
\noindent The following theorem states a sufficient condition for fractal arrays to be symmetric.

\begin{theorem}
Let $\bb{F}_r$ be a fractal array whose generator is $\bb{G}$. 
Then, $\bb{F}_r$ is symmetric if $\bb{G}$ is symmetric.
\label{theo:summetry}
\end{theorem}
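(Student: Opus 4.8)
The plan is to prove by induction on $r$ that $\bb{F}_r = \hat{\bb{F}}_r$, using the recursive structure in (\ref{eq:fractal}). The base case $r=0$ is trivial since $\bb{F}_0 = \{0\}$ is a single point, and $r=1$ holds by hypothesis since $\bb{F}_1 = \bb{G}$ is assumed symmetric. For the inductive step, I would assume $\bb{F}_r$ is symmetric and show $\bb{F}_{r+1}$ is symmetric. The key observation is that the reversal operation on $\bb{F}_{r+1}$ should distribute over the union and interact cleanly with the translations $n M^r$ for $n \in \bb{G}$.

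First I would compute $\max(\bb{F}_{r+1})$ and $\min(\bb{F}_{r+1})$ explicitly. Since all blocks $\bb{F}_r + nM^r$ are nonnegative (as $\min \bb{F}_r = 0$ and $\min \bb{G} = 0$) and $\bb{F}_r \subseteq [0, M^r - 1]$ when the coarray is hole-free — but here we are not assuming hole-freeness, so more care is needed; in general $\max \bb{F}_r \le M^r - 1$ need not hold. However, what I actually need is only that the block offsets $\{nM^r : n \in \bb{G}\}$ and the within-block sets $\bb{F}_r$ combine so that $\max(\bb{F}_{r+1}) = \max(\bb{G})M^r + \max(\bb{F}_r)$ and $\min(\bb{F}_{r+1}) = 0$. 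This holds provided the largest element overall comes from the block with the largest offset — which is true because $\max(\bb{G}) M^r \ge (\max(\bb{G}) - 1)M^r + \max(\bb{F}_r)$ would require $M^r \ge \max(\bb{F}_r)$; since $\bb{F}_r$ is built from $\bb{G}$ recursively and $M = |\bb{U}| \ge \max(\bb{G})+1$ (because $\bb{U} \subseteq \bb{D}$ and $\bb{D}$ spans at least $[0,\max(\bb{G})]$... actually $\bb{D}$ spans $[-\max\bb{G},\max\bb{G}]$ on its extremes but $\bb{U}$ is only the contiguous part), I would verify the bound $\max(\bb{F}_r) \le M^r - 1$ as a separate small induction, or circumvent it. Let me instead argue directly: reversing $\bb{F}_{r+1}$ gives
\begin{align*}
\hat{\bb{F}}_{r+1} &= \{\max(\bb{F}_{r+1}) - x : x \in \bb{F}_{r+1}\} \\
&= \bigcup_{n \in \bb{G}} \{\max(\bb{F}_{r+1}) - y - nM^r : y \in \bb{F}_r\}.
\end{align*}
Writing $\max(\bb{F}_{r+1}) = \max(\bb{G})M^r + \max(\bb{F}_r)$, the $n$-th term becomes $\{(\max(\bb{G}) - n)M^r + (\max(\bb{F}_r) - y) : y \in \bb{F}_r\} = (\hat{\bb{F}}_r) + (\max(\bb{G})-n)M^r$. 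Re-indexing the union by $n' = \max(\bb{G}) - n$, and using that $\bb{G}$ symmetric means $n \in \bb{G} \iff \max(\bb{G}) - n \in \bb{G}$ (equivalently $\hat{\bb{G}} = \bb{G}$), this union is $\bigcup_{n' \in \bb{G}} (\hat{\bb{F}}_r + n'M^r)$. By the inductive hypothesis $\hat{\bb{F}}_r = \bb{F}_r$, so this equals $\bigcup_{n' \in \bb{G}}(\bb{F}_r + n'M^r) = \bb{F}_{r+1}$, closing the induction.

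The main obstacle is establishing the formula $\max(\bb{F}_{r+1}) = \max(\bb{G})M^r + \max(\bb{F}_r)$, i.e., that the aperture behaves multiplicatively under the fractal construction even without a hole-free coarray. I expect this requires the auxiliary bound $\max(\bb{F}_r) < M^r$, which follows from $\max(\bb{G}) < M = |\bb{U}|$ — and this in turn needs a brief argument that the central ULA $\bb{U}$ of $\bb{D}$ contains at least $\{0,1,\dots,\max(\bb{G})\}$ up to... actually it need only be shown $|\bb{U}| > \max(\bb{G})$, which holds because consecutive differences realizable from $\bb{G}$ starting at $0$ force $\bb{U} \supseteq$ something of size exceeding $\max(\bb{G})$ in typical cases, but in full generality one should just invoke that any generator used in practice satisfies this, or restrict the theorem's scope accordingly. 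If $\bb{G}$ itself is symmetric and hole-free-coarrayed the bound is immediate from Theorem~\ref{theorem:holefree}. I would handle the general symmetric case by the re-indexing argument above, isolating the aperture identity as the one computational lemma to verify.
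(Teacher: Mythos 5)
Your induction and the re-indexing of the union via $n' = \max(\bb{G}) - n$ is exactly the argument the paper gives, so the approach is essentially identical and the core of the proof is correct. The one point you flag as the ``main obstacle'' --- the identity $\max(\bb{F}_{r+1}) = \max(\bb{G})M^r + \max(\bb{F}_r)$ --- is not actually an obstacle and needs no auxiliary bound, no hole-free assumption, and no restriction of the theorem's scope: since $\bb{F}_{r+1} = \{m + nM^r : m \in \bb{F}_r,\, n \in \bb{G}\}$ is a sumset, its maximum is the sum of the maxima of the two summands (every element is at most $\max(\bb{F}_r) + \max(\bb{G})M^r$, and that value is attained), and likewise $\min(\bb{F}_{r+1}) = \min(\bb{F}_r) + \min(\bb{G})M^r = 0$. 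Whether the blocks overlap, or whether $\max(\bb{F}_r) < M^r$, is irrelevant to computing the extremes of the union. With that observation your proof closes completely and matches the paper's.
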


\begin{proof}
We prove the theorem by induction.
\begin{itemize}[leftmargin=*]
\item \textbf{Base} ($k=1$): $\bb{F}_1=\bb{G}$, hence, $\bb{F}_1$ is symmetric.
\item \textbf{Assumption} ($k=r$): $\bb{F}_r$ is symmetric.
\item \textbf{Step} ($k=r+1$): First, we assume $\min(\bb{G})=0$, leading to $\min(\bb{F}_r)=0$.
In addition, we can rewrite $\bb{F}_{r+1}$ as
\begin{equation*}
\bb{F}_{r+1}=\{m+nM^r:\; m\in\bb{F}_r,\,n\in\bb{G}\}.
\end{equation*}
Hence, we get
\begin{align}
\begin{split}
&\max(\bb{F}_{r+1})=\max(\bb{F}_r)+\max(\bb{G})M^r, \\
&\min(\bb{F}_{r+1})=\min(\bb{F}_r)+\min(\bb{G})M^r=0.
\end{split}
\end{align}
Let $\hat{\bb{G}}$, $\hat{\bb{F}}_r$ and $\hat{\bb{F}}_{r+1}$ denote the reversed arrays of $\bb{G}$, $\bb{F}_r$ and $\bb{F}_{r+1}$ respectively. From the above equations we obtain 
\begin{align*}
\hat{\bb{F}}_{r+1}&\defeq 
\{\max(\bb{F}_{r+1})+\min(\bb{F}_{r+1})-l:\, l\in \bb{F}_{r+1}\} \\
&=\{\max(\bb{F}_{r+1})-l:\, l\in \bb{F}_{r+1}\} \\
&=\{\max(\bb{F}_r)+\max(\bb{G})M^r-l:\,l\in \bb{F}_{r+1}\}.
\end{align*}
Note that each $l\in\bb{F}_{r+1}$ can be expressed as $l=m+nM^r$ for some $m\in \bb{F}_r$ and $n\in \bb{G}$. Therefore,
{\small
\begin{align*}
&\hat{\bb{F}}_{r+1}
=\{\max(\bb{F}_{r+1})-(m+nM^r):\, m\in \bb{F}_r,\, n\in \bb{G}\} \\
&=\{\max(\bb{F}_r)+\max(\bb{G})M^r-(m+nM^r):\, m\in \bb{F}_r,\, n\in \bb{G}\} \\
&=\{\max(\bb{F}_r)-m+\big(\max(\bb{G})-n\big)M^r:\, m\in \bb{F}_r,\, n\in \bb{G}\} \\
&=\{\hat{m}+\hat{n}M^r:\, \hat{m}\in \hat{\bb{F}}_r,\, \hat{n}\in \hat{\bb{G}}\},
\end{align*}}
\hspace{-10pt} where the last equality follows from the definition of the reversed array.
Since $\hat{\bb{G}}=\bb{G}$ and $\hat{\bb{F}}_r=\bb{F}_r$, we get 
\begin{equation*}
  \hat{\bb{F}}_{r+1}=\{\hat{m}+\hat{n}M^r:\, \hat{m}\in \bb{F}_r,\, \hat{n}\in \bb{G}\}=\bb{F}_{r+1}, 
\end{equation*}
so that $\bb{F}_{r+1}$ is symmetric. \qedhere
\end{itemize}
\end{proof}

\noindent For Cantor arrays, the generator is $\bb{G}=[0\;1]$ which is symmetric. Thus, Theorem\,\ref{theo:summetry} provides an alternative explanation for the result presented in \cite{liu2017maximally} regarding the symmetry of Cantor arrays.

\subsection{Weight Function and Beampattern}
Next, we consider the weight function and beampattern of fractal arrays. The weight function is an important characteristic of a linear array and is associated with several array properties such as mutual coupling\cite{liu2016super,liu2016superII}, array economy \cite{liu2017maximally} and robustness \cite{liu2018robust,liu2018compare,liu2018optimizing}. In addition, the array beampattern, related to the weight function through the Fourier transform, dictates the array directivity and impacts the performance of correlation-based estimators and beamformers.

We start with defining the weight function and the corresponding beampattern.

\begin{definition}[Weight Function]
Consider a sensor array $\bb{G}$. The weight function $w_\bb{G}(m)$ equals the number of sensor pairs in $\bb{G}$ with separation $m$. Namely, \cite{liu2017maximally}  
\begin{equation*}
w_\bb{G}(m)=\abs{\{(n_1,n_2)\in\bb{G}^2:\; n_1-n_2=m\}},
\end{equation*}
where we define $\bb{S}^2\triangleq \bb{S}\times \bb{S}$ for any set $\bb{S}$. 
\end{definition}
\noindent Note that $w_\bb{G}(m)>0$ for any $m\in\bb{D}$ and zero otherwise, where $\bb{D}$ is the difference coarray of $\bb{G}$. Thus, the weight function is directly related to the difference coarray as well as the beampattern defined next.

\begin{definition}[Beampattern]
Consider a sensor array $\bb{G}$ whose difference coarray is $\bb{D}$. The beampattern of $\bb{G}$ is defined as
\begin{equation*}
B_\bb{G}(\omega)\defeq\sum_{m\in\bb{D}} w_\bb{G}(m)\exp\left(-j\omega m\right),
\end{equation*}
where $w_\bb{G}(m)$ is the weight function of $\bb{G}$, $\omega=\pi\sin(\theta)$ and $j=\sqrt{-1}$ is the imaginary unit. 
\end{definition}
\noindent Since $w_\bb{G}(m)$ is an even function \cite{pal2012nested}, the beampattern $B_\bb{G}(\omega)$ is real-valued. Moreover, $w_\bb{G}(m)=0$ for any $m\notin\bb{D}$, and therefore
\begin{align*}
    B_\bb{G}(\omega)&=\sum_{m\in\bb{D}} w_\bb{G}(m)\exp\left(-j\omega m\right) \\
    &=\sum_{m=-\infty}^{\infty} w_\bb{G}(m)\exp\left(-j\omega m\right)=\mathcal{F}\{w_\bb{G}\}(\omega),
\end{align*}
where $\mathcal{F}\{\cdot\}$ represents the discrete-time Fourier transform.

To derive both the weight function and the beampattern, we use the next definition.
\begin{definition}[$\ell$-Expansion]
Consider a sensor array $\bb{G}$ whose weight function is $w_\bb{G}(m)$. 
Given a positive integer $\ell$, we define the $\ell$-expansion of $w_\bb{G}(m)$ as the function
\begin{equation*}
w_\bb{G}^{\uparrow \ell}(m)\defeq
\begin{cases}
w_\bb{G}(n), & m=n\ell,\\
0, & otherwise.
\end{cases}
\end{equation*}
In other words, we create $w_\bb{G}^{\uparrow \ell}$ by adding $\ell-1$ zeros between each pair of consecutive entries of $w_\bb{G}$.   
\end{definition}
\noindent The Fourier transform of $w_\bb{G}^{\uparrow \ell}$ is given by
\begin{align*}
    \mathcal{F}\{w_\bb{G}^{\uparrow \ell}\}(\omega)&=\sum_{m=-\infty}^{\infty} w_\bb{G}^{\uparrow \ell}(m)\exp\left(-j\omega m\right) \\
    &=\sum_{n=-\infty}^{\infty} w_\bb{G}(n)\exp\left(-j\omega n\ell\right) \\
    &=B_\bb{G}(\ell w).
\end{align*}
Equipped with the above definition, we provide closed-form expressions for the weight function and the beampattern of fractal arrays in the following theorem.

\begin{theorem}
Let $\bb{F}_r$ be a fractal array whose generator is $\bb{G}$. Denote the weight function and beampattern of $\bb{G}$ by $w_{\bb{G}}$ and $B_\bb{G}(\omega)$ respectively. The weight function  $w_r$ of $\bb{F}_r$ is then given by
\begin{align*}
    &w_0(m)=\delta(m), \\
    &w_r(m)=\Conv{r-1} w_{\bb{G}}^{\uparrow M^i}(m), \quad r\geq 1,
\end{align*}
where $\delta(\cdot)$ is the Kronecker delta function and $\mathop{\scalebox{1.5}{$\circledast$}}$ denotes multiple convolution operations. The beampattern $B_r(\omega)$ of $\bb{F}_r$ is given by
\begin{equation*}
B_r(\omega)=\prod_{i=0}^{r-1}B_\bb{G}\left(M^i\omega\right),
\end{equation*}
where $B_\bb{G}(\omega)$ is the beampattern of $\bb{G}$.
\label{theo:weight}
\end{theorem}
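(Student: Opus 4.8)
The plan is to prove both formulas by induction on the array order $r$, exploiting the recursive decomposition $\bb{F}_{r+1}=\{m+n M^r:\,m\in\bb{F}_r,\,n\in\bb{G}\}$ already used in the proofs of Theorems\,\ref{theorem:holefree} and \ref{theo:summetry}. The key structural observation is that, because each replica of $\bb{F}_r$ inside $\bb{F}_{r+1}$ sits on its own block of width governed by a power of $M$, the pair separations in $\bb{F}_{r+1}$ split \emph{cleanly}: a sensor pair $(m_1+n_1M^r,\,m_2+n_2M^r)$ has separation $(m_1-m_2)+(n_1-n_2)M^r$, and I would first argue that this representation of a lag as ``inner lag plus $M^r\cdot$(generator lag)'' is \emph{unique} whenever the difference coarray of $\bb{F}_r$ fits inside a window of length $M^r$ — which is exactly what the hole-free case of Theorem\,\ref{theorem:holefree} guarantees, since there $\bb{D}_r=[-(M^r-1)/2,(M^r-1)/2]$. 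Under this uniqueness, counting pairs with a given separation factorizes, which is precisely the statement $w_{r+1}(m)=\bigl(w_r \circledast w_{\bb{G}}^{\uparrow M^r}\bigr)(m)$: the $M^r$-expansion places the generator's weight function on the coarse grid, and ordinary convolution with $w_r$ accounts for the fine offsets inside each replica.

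Concretely, the induction would run as follows. The base case $w_0=\delta$ is immediate from $\bb{F}_0=\{0\}$, and $w_1=w_{\bb{G}}$ matches the stated empty/one-term product. For the step, I would write
\begin{equation*}
w_{r+1}(m)=\bigl|\{(m_1,n_1,m_2,n_2): m_i\in\bb{F}_r,\ n_i\in\bb{G},\ (m_1-m_2)+(n_1-n_2)M^r=m\}\bigr|,
\end{equation*}
then use uniqueness of the decomposition $m=a+bM^r$ with $a\in\bb{D}_r$, $b\in\bb{D}$ to replace the single constraint by the pair of constraints $m_1-m_2=a$, $n_1-n_2=b$, so the count becomes $\sum_{b} w_r(m-bM^r)\,w_{\bb{G}}(b)=\sum_b w_r(m-bM^r)\,w_{\bb{G}}^{\uparrow M^r}(bM^r)=(w_r\circledast w_{\bb{G}}^{\uparrow M^r})(m)$. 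Unwinding the recursion gives $w_r=\Conv{r-1} w_{\bb{G}}^{\uparrow M^i}$. The beampattern formula then follows with essentially no extra work: taking the discrete-time Fourier transform, convolution becomes product, and by the identity $\mathcal{F}\{w_{\bb{G}}^{\uparrow \ell}\}(\omega)=B_{\bb{G}}(\ell\omega)$ established just before the theorem, $B_r(\omega)=\prod_{i=0}^{r-1}B_{\bb{G}}(M^i\omega)$; the base case $B_0(\omega)=\mathcal{F}\{\delta\}(\omega)=1$ matches the empty product.

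The main obstacle is justifying the uniqueness of the decomposition $m=a+bM^r$ — equivalently, that the replicas of $\bb{F}_r$ do not overlap in a way that merges or cancels lag contributions across different generator pairs. When $\bb{D}$ is hole-free this is clean, because $\bb{D}_r$ lies in an interval of length $M^r$ and $M^r$ is exactly the translation spacing, so the map $(a,b)\mapsto a+bM^r$ is injective on $\bb{D}_r\times\bb{D}$ by the standard base-$M^r$ digit argument. I would state the theorem (or at least this step) under the hole-free hypothesis inherited from the setting of Theorem\,\ref{theorem:holefree}, or alternatively note that even without hole-freeness one still gets $w_{r+1}(m)=\sum_b w_r(m-bM^r)w_{\bb{G}}(b)$ whenever $\bb{F}_r$ has aperture strictly less than $M^r$ — which again holds because $\max(\bb{F}_r)\le M^r-1$ follows inductively from $\max(\bb{F}_{r+1})=\max(\bb{F}_r)+\max(\bb{G})M^r$ together with $\max(\bb{G})\le M-1$. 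Once that aperture bound is in hand the rest is routine bookkeeping of convolutions and Fourier transforms.
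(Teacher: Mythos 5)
Your proposal is correct and follows essentially the same route as the paper's proof: induction on $r$, decomposition of each element of $\bb{F}_{r+1}$ as $n+lM^r$, counting lag-$m$ quadruples by partitioning over the generator lag $l$ to obtain $w_{r+1}=w_r\circledast w_{\bb{G}}^{\uparrow M^r}$, and passing to the Fourier domain for the beampattern. Your extra discussion of when the pair count in $\bb{F}_{r+1}^2$ equals the quadruple count (non-overlapping replicas, guaranteed by the hole-free/aperture condition) is a point the paper leaves implicit, and is a worthwhile caveat — though note that the bound $\max(\bb{G})\leq M-1$ you invoke is automatic only for hole-free generators, not in general.
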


\begin{proof}
We first prove the expression for the weight function. Note that $\bb{F}_0=\bb{D}_0=\{0\}$, hence, $w_0(m)=\delta(m)$.
For $r\geq 1$, we prove the result by induction.
\begin{itemize}[leftmargin=*]
\item \textbf{Base} $(k=1)$: In this case $\bb{F}_1=\bb{G}$, and indeed we get
\begin{equation*}
    w_1= \Conv{0} w_{\bb{G}}^{\uparrow M^i}=w_{\bb{G}}^{\uparrow 1}=w_{\bb{G}}.
\end{equation*}
\item \textbf{Assumption} $(k=r)$: $w_r=\Conv{r-1} w_{\bb{G}}^{\uparrow M^i}$.
\newline
\item \textbf{Step} $(k=r+1)$: By definition, 
\begin{equation*}
    w_{r+1}(m)\defeq\big|\{(m_1,m_2)\in \bb{F}_{r+1}^2:\, m_1-m_2=m\}\big|.
\end{equation*}
Recall that each $v\in\bb{F}_{r+1}$ can be expressed as $v=n_1+n_2M^r$ for some $n_1\in \bb{F}_r$ and $n_2\in \bb{G}$. Therefore,
{\footnotesize
\begin{align*}
&w_{r+1}(m)=\big|\{(m_1,m_2)\in \bb{F}_{r+1}^2:\, m_1-m_2=m\}\big| \\
&=\big|\{(n_1,n_2,l_1,l_2)\in \bb{F}_r^2\times \bb{G}^2:\, n_1+l_1M^r-(n_2+l_2M^r)=m\}\big| \\
&=\big|\{(n_1,n_2,l_1,l_2)\in \bb{F}_r^2\times \bb{G}^2:\, n_1-n_2+(l_1-l_2)M^r=m\}\big|.
\end{align*}}
\noindent Now, for a fixed $l\in\bb{D}$, consider the product between the number of tuples $(l_1,l_2)$ and the number of tuples $(n_1,n_2)$ that satisfy $l_1-l_2=l$ and $n_1-n_2=m-l\cdot M^r$ respectively. Notice that computing the latter for all $l\in\bb{D}$ and summing the results equals the desired number of quadruples $(n_1,n_2,l_1,l_2)$. Thus, we can write

{\footnotesize
\begin{align*}
w_{r+1}(m)&=\sum_{l\in\bb{D}}\Big(\big|\{(l_1,l_2)\in \bb{G}^2:\, l_1-l_2=l\}\big|\cdot\\
&\qquad\quad\;\;\, \big|\{(n_1,n_2)\in \bb{F}_r^2:\, n_1-n_2=m-l\cdot M^r\}\big|\Big).
\end{align*}}
In addition, we have that
{\small
\begin{align*}
    &w_{\bb{G}}(l)=\big|\{(l_1,l_2)\in \bb{G}^2:\, l_1-l_2=l\}\big|,\\
    &w_r(m-l\cdot M^r)=\big|\{(n_1,n_2)\in \bb{F}_r^2:\, n_1-n_2=m-l\cdot M^r\}\big|.
\end{align*}}
Hence, we obtain
\begin{equation*}
    w_{r+1}(m)=\sum_{l\in\bb{D}}w_{\bb{G}}(l)w_r(m-l\cdot M^r).
\end{equation*}
Since $w_{\bb{G}}(l)=0$ for any $l\notin \bb{D}$, we have $ w_{\bb{G}}^{\uparrow M^r}(n)=0$ for any $n\neq l\cdot M^r\,(l\in\mds{Z})$ which leads to 
\begin{align*}
 w_{r+1}(m)&=\sum_{l\in\mds{Z}}w_{\bb{G}}(l)w_r(m-l\cdot M^r) \\
&=\sum_{n\in\mds{Z}}w_{\bb{G}}^{\uparrow M^r}(n)w_r(m-n)=\big\{w_{\bb{G}}^{\uparrow M^r}\circledast w_r\big\}(m).
\end{align*}
By our assumption on $w_r$ we get
\begin{equation*}
    w_{r+1}=w_{\bb{G}}^{\uparrow M^r}\circledast w_r=\Conv{r} w_{\bb{G}}^{\uparrow M^i}.
\end{equation*}
\end{itemize}
Finally, multiple convolution operations translate to products in the Fourier domain, leading to the given expression for the beampattern.  
\end{proof}

Theorem\,\ref{theo:weight} provides simple expressions for both the weight function and beampattern which facilitate their analysis and optimization. These expressions suggest that the choice of the generator has a significant impact on the beampattern of the resultant fractal array with respect to the side-lobe level, grating lobes and the main-lobe, where the latter is also directly affected by the array order $r$. 

To demonstrate the above, we present a simple example in Fig.\,\ref{fig:bp} where we use an MHA as a generator $\bb{G}=[0\;1\;4\;6]$ and its fractal extension $\bb{F}_2$.
The generator's weight function $w_\bb{G}(m)$ and its $\ell$-extension $w_{\bb{G}}^{\uparrow \ell}(m)$  are shown in Fig.\,\ref{fig:bp}(a) and Fig.\,\ref{fig:bp}(b) respectively where $\ell=13$ is determined by the difference coarray. The weight function $w_2(m)$ of $\bb{F}_2$, displayed in Fig.\,\ref{fig:bp}(c), results from the convolution (marked in red) of $w_\bb{G}(m)$ and $w_{\bb{G}}^{\uparrow \ell}(m)$, as stated by Theorem\,\ref{theo:weight}. In addition, we provide in Fig.\,\ref{fig:bp} the beampatterns $B_\bb{G}(\omega)$, $B_\bb{G}(\ell\omega)$ and $B_2(\omega)$ which correspond to the weight functions $w_\bb{G}(m)$, $w_{\bb{G}}^{\uparrow \ell}(m)$ and $w_2(m)$ respectively. As can be seen, the beampattern $B_\bb{G}(\ell\omega)$, given in Fig.\,\ref{fig:bp}(e), consists of $\ell$ copies of the generator's beampattern shown in Fig.\,\ref{fig:bp}(d), each compressed by a factor of $\ell$. We outline in blue the beampattern $B_2(\omega)$ of $\bb{F}_2$ in Fig.\,\ref{fig:bp}(f), which results from the product of $w_\bb{G}(m)$ and $B_\bb{G}(\ell\omega)$ shown in dashed-lines. 

Observing the beampatterns, we can infer that the main-lobe of $B_2(\omega)$ is determined by the main-lobe of $B_\bb{G}(\omega)$ (which in turn is dictated by the generator difference coarray) and the compression factor $\ell^{r-1}$ where $r$ is the array order. Therefore, the main-lobe width of $B_2(\omega)$ decreases as $\ell$ and $r$ increase, which is expected since the array aperture grows accordingly. Moreover, typically, the minimum inter-element spacing of the generator is half the wavelength to prevent grating lobes. This ensures that the beampattern of the resultant fractal array also exhibits no grating lobes thanks to the product expression given in Theorem\,\ref{theo:weight} which eliminates any grating lobes appearing in $B_\bb{G}(\ell\omega)$. Note, however, that the side-lobe level of $B_2(\omega)$ may be lower (as in Fig.\,\ref{fig:bp}) or higher than that of $B_\bb{G}(\omega)$. Therefore, the generator should be carefully designed to achieve adequate side-lobe levels. One can obtain low side-lobes by relying on the expression of the beampattern given in Theorem\,\ref{theo:weight} and computing appropriate weights to apply on the difference coarray of $\bb{F}_r$.

\begin{figure*}[ht]
 \centering
 \includegraphics[trim={0cm 4cm 0cm 4cm},clip,height = 7cm, width = 1\linewidth]{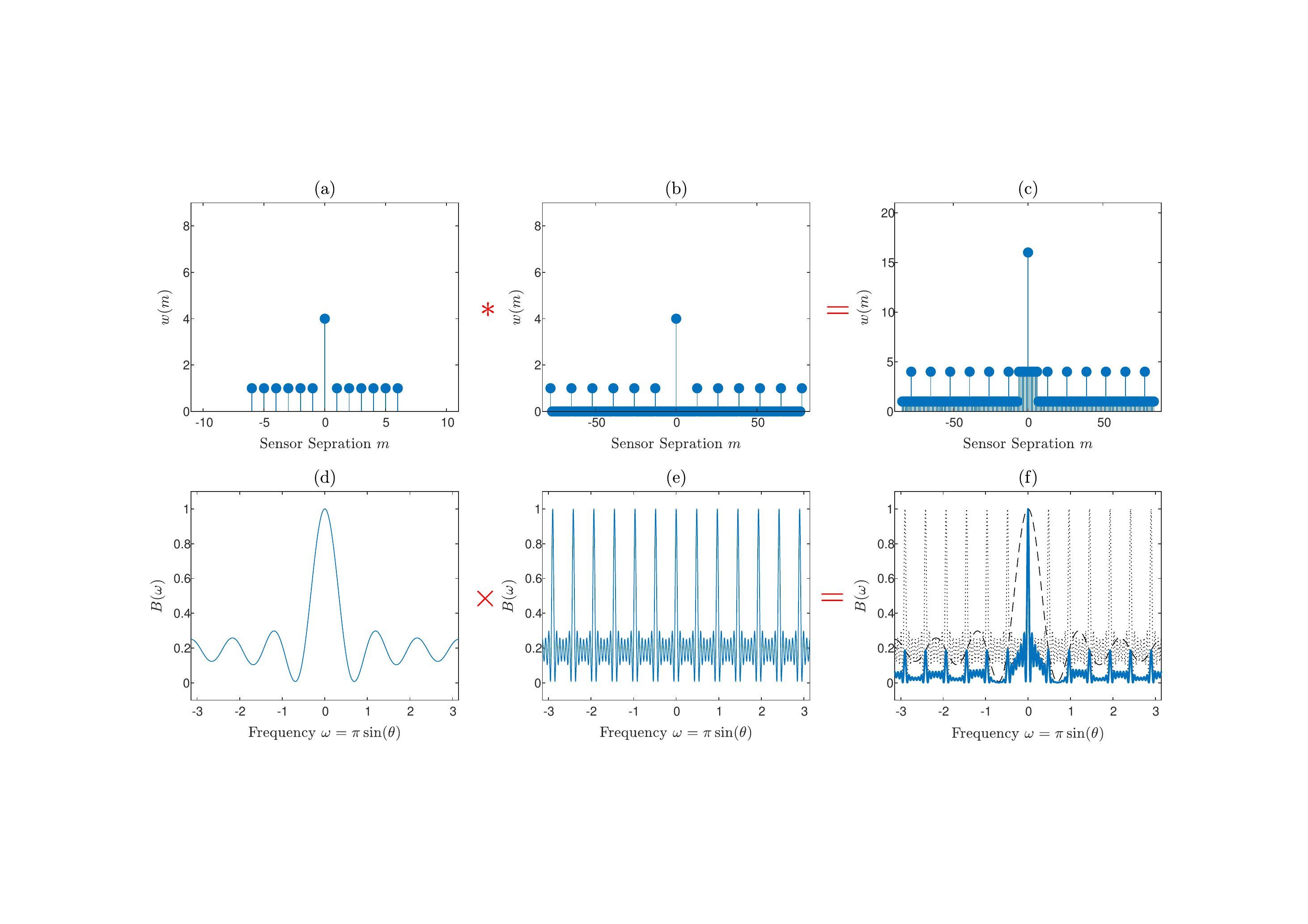}
\caption{\textbf{Weight Functions and Beampatterns.} (a) The weight function $w_\bb{G}(m)$ of $\bb{G}=[0\;1\;4\;6]$ and (b) its $\ell$-extension $w_{\bb{G}}^{\uparrow \ell}(m)$ with $\ell=13$. (c) The weight function $w_2(m)$ of the second-order fractal extension of $\bb{G}$, given by the convolution of (a) and (b) indicated by red symbols. The beampatterns related to $w_\bb{G}(m)$, $w_{\bb{G}}^{\uparrow \ell}(m)$ and $w_2(m)$ are shown in (d), (e) and (f) respectively, where the latter stems from the product marked in red.} 
  \label{fig:bp}
 \end{figure*}

\subsection{Array Economy and Robustness}
\label{subsec:robust}

Two contradicting properties of an array are the array economy and robustness to sensor failures.  
Array economy is related to the essentialness of each sensor which means that removing a specific sensor degrades the difference coarray \cite{liu2017maximally}. To reduce power and cost, one may desire to remove redundant array elements. When all sensors are essential, the array is said to be maximally economic \cite{liu2017maximally}. On the other hand, sensors might malfunction and create discontinuities (holes) in the difference coarray, making maximally economic arrays sensitive to element faults. Therefore, redundant elements may be added to make the array robust to sensor failures. 

Here we investigate fractal arrays with respect to the properties of essentialness and robustness using the notion of fragility introduced in \cite{liu2018robust}. To that end, we begin with the following definitions.

\begin{definition}[Essentialness]
Consider a sensor array $\bb{G}$ whose difference coarray is $\bb{D}$. Given a sensor $n\in\bb{G}$, define $\bb{G}_{-n}=\bb{G}\setminus \{n\}$ and denote the corresponding difference coarray by $\bb{D}_{-n}$. The sensor located at $n\in\bb{G}$ is said to be essential \cite{liu2017maximally} if $\bb{D}_{-n}\neq\bb{D}$. The sensor $n$ is inessetinal if it is not essential.  
\end{definition}

\begin{definition}[Maximally Economic]
A sensor array $\bb{G}$ is said to be \textit{maximally economic} if all of its sensors are essential.    
\end{definition}

\begin{lemma}
\cite{liu2017maximally} Consider a sensor array $\bb{G}$ whose weight function is $w_\bb{G}$.
If $n_1,n_2\in \bb{G}$ and $w_\bb{G}(n_1-n_2)= 1$, then $n_1$ and $n_2$ are
both essential with respect to $\bb{G}$. 
\label{lemma:essential}
\end{lemma}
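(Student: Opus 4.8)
The plan is to unwind the definitions directly: the hypothesis $w_\bb{G}(n_1-n_2)=1$ says that the ordered pair $(n_1,n_2)$ is the \emph{only} pair of sensors of $\bb{G}$ whose separation equals $d\defeq n_1-n_2$, so deleting either $n_1$ or $n_2$ must destroy the lag $d$ (equivalently $-d$) in the difference coarray, which is precisely the condition defining essentialness.

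First I would dispose of the degenerate case $d=0$: here $n_1=n_2$ and $w_\bb{G}(0)=\abs{\bb{G}}=1$, so $\bb{G}$ is a singleton whose sole sensor is trivially essential (its removal yields the empty coarray while $\bb{D}=\{0\}$). Assume henceforth $d\neq 0$. Next I would record the elementary monotonicity $\bb{D}_{-n}\subseteq\bb{D}$, valid because $\bb{G}_{-n}\subseteq\bb{G}$; consequently, verifying $\bb{D}_{-n}\neq\bb{D}$ reduces to exhibiting a single lag of $\bb{D}$ that is absent from $\bb{D}_{-n}$.

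The core step is then to show $d\notin\bb{D}_{-n_1}$. Suppose not; then there exist $a,b\in\bb{G}_{-n_1}\subseteq\bb{G}$ with $a-b=d$, so $(a,b)$ is an ordered pair of $\bb{G}^2$ of difference $d$, and by $w_\bb{G}(d)=1$ and uniqueness we must have $(a,b)=(n_1,n_2)$, forcing $a=n_1\in\bb{G}_{-n_1}$ --- a contradiction. Hence $d\in\bb{D}\setminus\bb{D}_{-n_1}$, so $n_1$ is essential. The argument for $n_2$ is symmetric: since the weight function is even, $w_\bb{G}(-d)=w_\bb{G}(d)=1$ with unique realizing pair $(n_2,n_1)$, and the same contradiction shows $-d\notin\bb{D}_{-n_2}$, so $n_2$ is essential as well; alternatively one reruns the displayed contradiction with the pair forced to be $(n_1,n_2)$, now yielding $b=n_2\in\bb{G}_{-n_2}$.

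I do not expect a genuine obstacle: the statement is essentially a direct consequence of the definitions. The only points requiring care are the $d=0$ edge case and the (trivial but worth stating) inclusion $\bb{D}_{-n}\subseteq\bb{D}$, which is what allows the ``lost-lag'' argument to conclude $\bb{D}_{-n}\neq\bb{D}$ rather than merely $\bb{D}_{-n}\subsetneq\bb{D}$.
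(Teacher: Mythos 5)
The paper states this lemma without proof, citing it from \cite{liu2017maximally}, so there is no in-paper argument to compare against. Your proof is correct and is the standard direct argument: since $w_\bb{G}(d)=1$ means $(n_1,n_2)$ is the unique ordered pair realizing the lag $d=n_1-n_2$, removing either endpoint deletes $d$ from the difference coarray, and the inclusion $\bb{D}_{-n}\subseteq\bb{D}$ turns this lost lag into $\bb{D}_{-n}\neq\bb{D}$. Your handling of the $d=0$ edge case (which forces $\abs{\bb{G}}=1$) is also right, though note that the lemma is only ever invoked in the paper with $n_1\neq n_2$, so that case is not needed in practice.
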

\noindent Lemma\,\ref{lemma:essential} indicates that a sensor $n_1\in\bb{G}$ is essential if there exists $n_2\in\bb{G}$ such that $w_\bb{G}(n_1-n_2)= 1$. Note, however, that the converse may not be true, i.e., the lack of such $n_2$ does not automatically imply that $n_1$ is inessential. 

Given Lemma\,\ref{lemma:essential}, a sufficient but not necessary condition for an array $\bb{G}$ to be maximally economic can be defined as follows
\begin{equation}
    \tag{C1}
    \forall n_1\in\bb{G},\;\exists n_2\in\bb{G}:\;w_\bb{G}(n_1-n_2)= 1. 
    \label{condition:messa}
\end{equation}
This condition, however, requires to test the essentialness of each sensor, leading to heavy calculations for large arrays. The result of the next theorem avoids this computational burden by guaranteeing that fractal arrays satisfy condition (\ref{condition:messa}) when the generator fulfills it. 

\begin{theorem}
Let $\bb{F}_r$ be the fractal array generated from $\bb{G}$ whose difference coarray is hole-free. Then, $\bb{F}$ satisfies condition (\ref{condition:messa}) if $\bb{G}$ satisfies it.
\label{theo:economics}
\end{theorem}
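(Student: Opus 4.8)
The plan is to argue by induction on the array order $r$, using the weight-function recursion from Theorem~\ref{theo:weight} together with the hole-freeness of $\bb{D}_r$ from Theorem~\ref{theorem:holefree} to turn condition~(\ref{condition:messa}) for $\bb{F}_{r+1}$ into condition~(\ref{condition:messa}) for $\bb{F}_r$ and for $\bb{G}$ separately.

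The key preliminary observation I would isolate first is a ``digit-wise'' factorization of the weight function of $\bb{F}_{r+1}$. Since $\bb{D}$ is a symmetric hole-free coarray, $M=\abs{\bb{D}}$ is odd, so by Theorem~\ref{theorem:holefree} the set $\bb{D}_r=\left[-\tfrac{M^r-1}{2},\tfrac{M^r-1}{2}\right]$ is a full set of representatives of $\mds{Z}/M^r\mds{Z}$; hence the translates $\{\,nM^r+\bb{D}_r : n\in\mds{Z}\,\}$ partition $\mds{Z}$. Recalling from the proof of Theorem~\ref{theo:weight} that $w_{r+1}=w_{\bb{G}}^{\uparrow M^r}\circledast w_r$, and that $w_{\bb{G}}^{\uparrow M^r}$ is supported on the multiples $nM^r$ with $n\in\bb{D}$ while $w_r$ is supported on $\bb{D}_r$, for any integer written (uniquely) as $m+nM^r$ with $m\in\bb{D}_r$ and $n\in\mds{Z}$ exactly one term of the convolution sum survives, giving
\[
w_{r+1}(m+nM^r)=w_{\bb{G}}(n)\,w_r(m),\qquad m\in\bb{D}_r,\ n\in\mds{Z}.
\]
In particular $w_{r+1}(m+nM^r)=1$ if and only if $w_{\bb{G}}(n)=1$ and $w_r(m)=1$.

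With this in hand the induction is routine. For $r=1$ we have $\bb{F}_1=\bb{G}$, so (\ref{condition:messa}) for $\bb{F}_1$ is literally the hypothesis. For the step, assume $\bb{F}_r$ satisfies (\ref{condition:messa}) and let $m_1\in\bb{F}_{r+1}$ be arbitrary; write $m_1=a_1+b_1M^r$ with $a_1\in\bb{F}_r$ and $b_1\in\bb{G}$. By the induction hypothesis there is $a_2\in\bb{F}_r$ with $w_r(a_1-a_2)=1$, and by (\ref{condition:messa}) applied to $\bb{G}$ there is $b_2\in\bb{G}$ with $w_{\bb{G}}(b_1-b_2)=1$. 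Setting $m_2\defeq a_2+b_2M^r\in\bb{F}_{r+1}$ and noting $a_1-a_2\in\bb{D}_r$ and $b_1-b_2\in\bb{D}$, the factorization above yields $w_{r+1}(m_1-m_2)=w_{\bb{G}}(b_1-b_2)\,w_r(a_1-a_2)=1$, so $\bb{F}_{r+1}$ satisfies (\ref{condition:messa}).

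I expect the main obstacle to be the factorization identity for $w_{r+1}$: one must notice that hole-freeness is exactly what makes $\bb{D}_r$ a complete residue system modulo $M^r$, so that the convolution defining $w_{r+1}$ has no overlapping or cancelling terms; everything after that — combining the two separate witnesses $a_2$ and $b_2$ into $m_2$ — is bookkeeping. A minor remark worth a line is that the chosen $m_2$ is automatically distinct from $m_1$ when $L>1$, since $w_r(0)=\abs{\bb{F}_r}>1$ forces $a_2\neq a_1$; this is not needed for the argument but clarifies that the produced witness is nontrivial.
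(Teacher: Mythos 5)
Your proof is correct and follows essentially the same route as the paper's: induction on $r$, the convolution recursion $w_{r+1}=w_{\bb{G}}^{\uparrow M^r}\circledast w_r$, and the hole-freeness of $\bb{D}_r$ to isolate the single surviving term, then combining the two witnesses into $m_2=a_2+b_2M^r$. The only difference is presentational — you package the single-term argument as a standalone factorization identity $w_{r+1}(m+nM^r)=w_{\bb{G}}(n)\,w_r(m)$, whereas the paper verifies it inline at the specific lag $m_1-m_2$.
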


\begin{proof}
First, observe that $\bb{F}_0=\{0\}$, hence, $w_0(0-0)=1$ and $\bb{F}_0$ is maximally economic and satisfies condition (\ref{condition:messa}). For $r\geq1$ we prove the theorem by induction.
\begin{itemize}[leftmargin=*]
\item \textbf{Base} $(k=1)$: In this case, $\bb{F}_1=\bb{G}$, therefore, $\bb{F}_1$ is maximally economic by satisfying condition (\ref{condition:messa}).
\item \textbf{Assumption} $(k=r)$: $\bb{F}_r$ is maximally economic by satisfying condition (\ref{condition:messa}).
\item \textbf{Step} $(k=r+1)$: Both $\bb{G}$ and $\bb{F}_r$ are  maximally economic by satisfying condition (\ref{condition:messa}). Hence, it holds that
\begin{align*}
&\forall l_1\in \bb{G},\; \exists l_2\in \bb{G}:\quad w_\bb{G}(l_1-l_2)=1, \\
&\forall n_1\in \bb{F}_r,\; \exists n_2\in \bb{F}_r:\quad w_r(n_1-n_2)=1.
\end{align*}

Consider an arbitrary $m_1\in\bb{F}_{r+1}$. By the array definition, there exist $n_1\in\bb{F}_r$ and $l_1\in\bb{G}$ such that $m_1=n_1+l_1M^r$. Moreover, since $\bb{G}$ and $\bb{F}_r$ are  maximally economic by satisfying condition (\ref{condition:messa}), there exist $ n_2\in\bb{F}_r$ and $ l_2\in\bb{G}$ such that  
\begin{equation*}
    w_\bb{G}(l_1-l_2)=1,\quad w_r(n_1-n_2)=1. 
\end{equation*}
Define $m=m_1-m_2$ where $m_2=n_2+l_2M^r$. Since $m_2\in\bb{F}_{r+1}$, we have that $m\in\bb{D}_{r+1}$ and $w_{r+1}(m)>0$.

We next prove that $w_{r+1}(m)=1$.
Following the proof of Theorem\,\ref{theo:weight} we write $w_{r+1}(m)$ as
\begin{equation*}
    w_{r+1}(m)=\sum_{l\in\bb{D}}w_{\bb{G}}(l)w_r(m-l\cdot M^r).
\end{equation*}
According to Theorem\,\ref{theorem:holefree}, $\bb{D}_r$ is hole-free and
\begin{equation*}
    \bb{D}_r=\left[-\frac{M^r-1}{2},\frac{M^r-1}{2}\right].
\end{equation*}
This implies that $w_r(n)=0$, for any $n\notin \left[-\frac{M^r-1}{2},\frac{M^r-1}{2}\right]$. Hence, for $l\in\bb{D}$ we have that $w_r(m-lM^r)>0$ if
\begin{align*}
    -\frac{M^r-1}{2}\leq m-lM^r \leq \frac{M^r-1}{2},
\end{align*}
i.e, when $l$ satisfies
\begin{align*}
        -\frac{M^r-1}{2}\leq n_1-n_2+(l_1-l_2)M^r-lM^r \leq \frac{M^r-1}{2}.
\end{align*}
From the latter we conclude that $w_r(m-lM^r)>0$ when $l=l_1-l_2$ and zero otherwise. This leads to
\begin{equation*}
    w_{r+1}(m) = w_\bb{G}(l_1-l_2)w_r(n_1-n_2)=1\cdot 1=1.
\end{equation*}
Therefore, $m_1$ is essential. Finally, since $m_1$ was chosen arbitrarily, all sensors in $\bb{F}_{r+1}$ are essential, i.e., $\bb{F}_{r+1}$ fulfills condition (\ref{condition:messa}) and it is maximally economic. \qedhere
\end{itemize}
\end{proof}

\noindent From Theorem\,\ref{theo:economics}, Cantor arrays are maximally economic since their generator $\bb{G}=[0\;1]$ is maximally economic and exhibits a hole-free difference coarray. Hence, Theorem\,\ref{theo:economics} extends the result of the economy of Cantor arrays presented in \cite{liu2017maximally}. 

Next, we study the robustness of fractal sparse arrays, starting with fragility.

\begin{definition}[Fragility]
Consider a sensor array $\bb{G}$. Define the following sub-array $\bb{E}=\{n\in\bb{G}\,|\;n\text{ is essential w.r.t }\bb{G}\}.$
The array fragility $F_\bb{G}$ is defined as \cite{liu2018robust}
\begin{equation*}
    F_\bb{G}\defeq \frac{\abs{\bb{E}}}{\abs{\bb{G}}}.
\end{equation*}
The fragility $F_\bb{G}$ quantifies the robustness/sensitivity of the difference coarray to sensor failures \cite{liu2018robust}.
\end{definition}

\noindent The fragility of any sparse array with $N\geq 4$ sensors satisfies $\frac{2}{N}\leq F_\bb{G}\leq 1$. For maximally economic sparse
arrays $\bb{E}=\bb{G}$, hence, $F_\bb{G}=1$. In contrast, an array such as a ULA and a RMRA \cite{liu2018optimizing} exhibit minimum fragility $F_\bb{G}=\frac{2}{N}$. 

The theorem below provides a relation between the fragility of the generator and the fragility of the fractal array created from it.

\begin{theorem}
Let $\bb{F}_r$ be the fractal array generated from $\bb{G}$ whose difference coarray $\bb{D}$ is hole-free. Denote by $F_\bb{G}$ and $F_r$ the fragility of $\bb{G}$ and $\bb{F}_r$ respectively. Then, it holds that
\begin{equation*}
    F_r\leq F_\bb{G},\quad \forall r\geq 1,
\end{equation*}
implying that $\bb{F}_r$ is at least as robust as $\bb{G}$ is.
\label{theo:robustness}
\end{theorem}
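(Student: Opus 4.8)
The plan is to bound the number of essential sensors of $\bb{F}_r$ in terms of those of $\bb{G}$ and then divide by the (already known) cardinalities. Write $\bb{E}_r$ for the set of essential sensors of $\bb{F}_r$ and $\bb{E}$ for that of $\bb{G}$, so $F_r = \abs{\bb{E}_r}/\abs{\bb{F}_r}$ and $F_\bb{G} = \abs{\bb{E}}/\abs{\bb{G}}$. Since $\bb{D}$ is hole-free, the excerpt already records $\abs{\bb{F}_r} = \abs{\bb{G}}^r = L^r$, and Theorem\,\ref{theorem:holefree} gives $\bb{D}_r = [-(M^r-1)/2,(M^r-1)/2]$; as $\min\bb{F}_r = 0$ this forces $\bb{F}_r \subseteq [0,(M^r-1)/2]$. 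Hence the $L$ replicas $\bb{F}_r + nM^r$, $n\in\bb{G}$, whose union is $\bb{F}_{r+1}$, lie in the pairwise disjoint intervals $[nM^r, nM^r+(M^r-1)/2]$, so every element of $\bb{F}_{r+1}$ has a unique decomposition $n_1 + l_1 M^r$ with $n_1\in\bb{F}_r$, $l_1\in\bb{G}$.

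The heart of the argument is the claim: if $m_1 = n_1 + l_1 M^r \in \bb{F}_{r+1}$ is essential with respect to $\bb{F}_{r+1}$, then $l_1$ is essential with respect to $\bb{G}$. I would prove the contrapositive. Suppose $l_1$ is inessential, i.e.\ the difference coarray of $\bb{G}\setminus\{l_1\}$ still equals $\bb{D}$, so every $n\in\bb{D}$ can be written $n = u-v$ with $u,v\in\bb{G}\setminus\{l_1\}$. I must show $\bb{F}_{r+1}\setminus\{m_1\}$ still realizes every lag $d\in\bb{D}_{r+1} = [-(M^{r+1}-1)/2,(M^{r+1}-1)/2]$. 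Because $\bb{D}$ and $\bb{D}_r$ are symmetric integer intervals, $M$ is odd, so balanced division of $d$ by $M^r$ yields a unique decomposition $d = nM^r + m$ with $m\in\bb{D}_r$, and a short estimate using $\abs d\le (M^{r+1}-1)/2$ shows $\abs n\le (M-1)/2$, i.e.\ $n\in\bb{D}$ (this mirrors the ``$M$ consecutive replicas'' structure in the proof of Theorem\,\ref{theorem:holefree}). Now choose $u,v\in\bb{G}\setminus\{l_1\}$ with $u-v=n$ and $s,t\in\bb{F}_r$ with $s-t=m$; then $(s+uM^r)-(t+vM^r) = m+nM^r = d$, and since $u,v\neq l_1$ while $m_1$ sits in the disjoint $l_1$-replica, neither $s+uM^r$ nor $t+vM^r$ equals $m_1$. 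So the lag $d$ survives the removal of $m_1$; as $d$ was arbitrary, $m_1$ is inessential, proving the claim.

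Granting the claim, $\bb{E}_{r+1}\subseteq\{n+lM^r:\, n\in\bb{F}_r,\ l\in\bb{E}\}$, a set of exactly $\abs{\bb{F}_r}\,\abs{\bb{E}} = L^r\abs{\bb{E}}$ elements by disjointness of the replicas; therefore $F_{r+1} = \abs{\bb{E}_{r+1}}/L^{r+1} \le L^r\abs{\bb{E}}/L^{r+1} = \abs{\bb{E}}/L = F_\bb{G}$ for every $r\ge1$, and since $\bb{F}_1=\bb{G}$ gives $F_1=F_\bb{G}$, we conclude $F_r\le F_\bb{G}$ for all $r\ge1$. (Strengthening the claim to also give $n_1\in\bb{E}_r$ would yield $\abs{\bb{E}_r}\le\abs{\bb{E}}^r$ and the sharper bound $F_r\le F_\bb{G}^r$, but the above suffices.) I expect the claim to be the main obstacle: it requires showing that deleting one sensor from a single replica cannot destroy any lag of $\bb{D}_{r+1}$, and this is precisely where the hole-free hypothesis enters, through the unique coarse/fine decomposition of each lag together with the fact that the coarse part can always be realized while avoiding the index $l_1$.
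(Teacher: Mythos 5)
Your proof is correct, but it runs in the opposite direction from the paper's. The paper argues by induction on the fragility itself: for each \emph{inessential} sensor $n$ of $\bb{F}_r$, it forms the fractal extension of $\bb{F}_r\setminus\{n\}$, notes (by re-running the proof of Theorem~\ref{theorem:holefree}) that its coarray is still $\bb{D}_{r+1}$, and concludes that all $L$ translates of $n$ are inessential in $\bb{F}_{r+1}$; this gives $\abs{\bb{I}_{r+1}}\geq L\abs{\bb{I}_r}$ and hence the monotone chain $F_{r+1}\leq F_r\leq\dots\leq F_\bb{G}$. You instead propagate inessentialness from the \emph{coarse} factor $\bb{G}$ rather than the fine factor $\bb{F}_r$: via the balanced base-$M^r$ decomposition of each lag (valid precisely because $\bb{D}$ and $\bb{D}_r$ are symmetric hole-free intervals and $M$ is odd), you show that if $l_1$ is inessential in $\bb{G}$ then every sensor in the $l_1$-replica is inessential in $\bb{F}_{r+1}$, giving $\bb{E}_{r+1}\subseteq \bb{F}_r+M^r\bb{E}$ and the direct bound $F_{r+1}\leq L^r\abs{\bb{E}}/L^{r+1}=F_\bb{G}$ with no induction on the fragility at all. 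Both arguments ultimately rest on the same structural fact --- the coarray of the extension depends only on the coarrays of the two factors, so deleting a sensor whose ``coordinate'' is inessential cannot destroy a lag --- but applied at different levels of the hierarchy. The paper's version buys the slightly stronger monotonicity $F_{r+1}\leq F_r$; yours is self-contained at each order, and, as you note, combining the two (essentialness of $m_1=n_1+l_1M^r$ forces both $n_1\in\bb{E}_r$ and $l_1\in\bb{E}$) would yield the sharper multiplicative bound $F_r\leq F_\bb{G}^r$, which neither proof states on its own.
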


\begin{proof}
We prove the theorem by induction.
\begin{itemize}[leftmargin=*]
\item \textbf{Base} $(k=1)$: $\bb{F}_1=\bb{G}$, hence, $F_1=F_\bb{G}\leq F_\bb{G}$.
\item \textbf{Assumption} $(k=r):$ The fragility of $\bb{F}_r$ satisfies 
$F_r\leq F_\bb{G}$.
\item \textbf{Step} $(k=r+1)$: Denote by $L$ the number of elements in $\bb{G}$. Define $\bb{E}_r$ and $\bb{I}_r$ as the sets of essential and inessential sensors of $\bb{F}_r$ respectively. Notice that $\bb{E}_r\,\cap\,\bb{I}_r=\emptyset$, hence, $\abs{\bb{F}_r}=\abs{\bb{E}_r}+\abs{\bb{I}_r}$. The fragility of $\bb{F}_r$ can be written as
\begin{equation*}
    F_r=\frac{\abs{\bb{E}_r}}{\abs{\bb{F}_r}}=\frac{\abs{\bb{F}_r}-\abs{\bb{I}_r}}{\abs{\bb{F}_r}}=\frac{L^r-\abs{\bb{I}_r}}{L^r},
\end{equation*}
where $\abs{\bb{F}_r}=L^r$ since $\bb{G}$ has a hole-free difference coarray.
Consider an arbitrary $n\in\bb{I}_r$ and define $\bb{F}_r^{'}=\bb{F}_r\setminus\{n\}$. Denoting by $\bb{D}_r$ and $\bb{D}_r^{'}$ the difference coarrays of $\bb{F}_r$ and $\bb{F}_r^{'}$ respectively, we have that $\bb{D}_r^{'}=\bb{D}_r$. In addition, we define the following fractal extension of $\bb{F}_r^{'}$ as
\begin{equation*}
    \bb{F}_{r+1}^{'}=\{m+n\abs{\bb{D}}^r\,|\;m\in\bb{F}_r^{'},\,n\in\bb{G}\}.
\end{equation*}
Notice that $\bb{F}_{r+1}^{'}\subseteq \bb{F}_{r+1}$ and following the proof of Theorem\,\ref{theorem:holefree}, we get that $\bb{D}_{r+1}^{'}=\bb{D}_{r+1}$. Since the latter is true for any $n\in\bb{I}_r$, we have that $\abs{\bb{I}_{r+1}}\geq L\cdot\abs{\bb{I}_r}$ where $\bb{I}_{r+1}$ is the set of inessential sensors w.r.t $\bb{F}_{r+1}$. Therefore, it holds that
\begin{align*}
    F_{r+1}&=\frac{L^{r+1}-\abs{\bb{I}_{r+1}}}{L^{r+1}}\\
    &\leq\frac{L^{r+1}-L\cdot\abs{\bb{I}_r}}{L^{r+1}}=\frac{L^r-\abs{\bb{I}_r}}{L^r}=F_r\leq F_\bb{G},
\end{align*}
completing the proof. \qedhere
\end{itemize}
\end{proof}

Theorem\,\ref{theo:robustness} suggests a simple way for constructing large robust arrays. We demonstrate this approach in Fig.\,\ref{fig:robust} using fractal extensions of several MRAs and RMRAs, exemplifying a maximally economic fractal array versus a robust fractal array. 

As can be expected, the increase in the array robustness as the array order $r$ grows is at the expense of sensor redundancy, leading to lower DOF with respect to the number of physical sensors. Thus, the array order $r$ should be kept small, in the range 2-4 which is typically adequate for creating sufficiently large arrays. Moreover, the generator should be carefully chosen according to the quality and cost of the sensing device. For example, when the total budget or hardware constraints dictate the use of sensors susceptible to failures, the generator must be designed to exhibit low fragility while compromising on the size of the difference coarray with regard to the physical elements.  

\begin{figure*}[ht]
 \centering
 \includegraphics[trim={0cm 6.5cm 0cm 6.5cm},clip,height = 4.5cm, width = 0.9\linewidth]{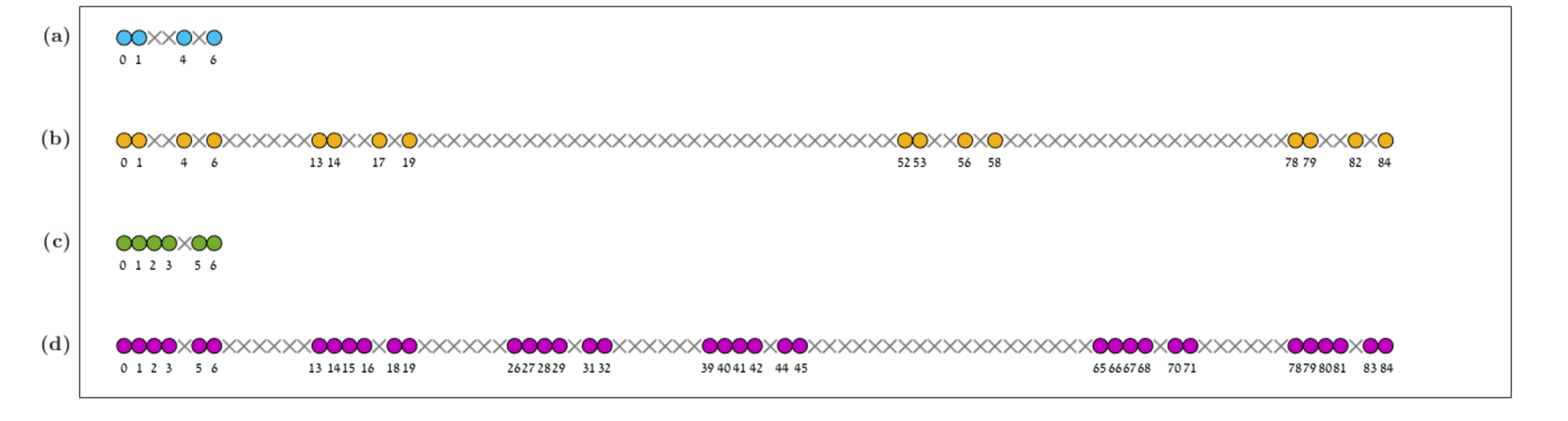}
\caption{\textbf{Economy and Robustness.} (a) Maximally economic array and (c) RMRA. The fractal extensions for $r=2$ of (a) and (c) are (b) and (d) respectively. The fragility values of arrays (a)-(d) are  1, 1, 1/3 and 1/9.}
  \label{fig:robust}
 \end{figure*}

\subsection{Mutual Coupling}
In Section\,\ref{subsec:model} we presented the signal model under the assumption that the elements do not interfere with each other. However, in practice, any sensor output is influenced by its adjacent sensors. This phenomena, called mutual coupling, has an adverse effect on the beampattern, degrading the performance of correlation-based estimators.

To address the effect of mutul coupling, we modify the signal model (\ref{eq:model}) as follows:
\begin{equation}
\label{eq:modelmc}
{\bf x}=\sum_{i=1}^K s_i{\bf Ca}(\theta_i)+{\bf w}={\bf CAs}+{\bf w},
\end{equation}
where $\bf C$ is a mutual coupling matrix derived from electromagnetics \cite{svantesson1999direction,liu2016super}. 
Assuming an $N$-element array $\bb{G}$, we consider a reduced coupling model
\cite{svantesson1999direction,svantesson1999modeling}
where $\bf C$ is an $N\times N$ symmetric Toeplitz matrix given by
\begin{equation}
    {\bf C}=
    \begin{bmatrix}
    1   & c_i & \dots & c_q & 0 & \dots & 0 \\
    c_i & 1   &       &     & \ddots &       & \vdots \\
    \vdots &    &  \ddots     &     &  &    \ddots   & 0 \\
    c_q &    &      &  \ddots   &   &       & c_q \\
    0 & \ddots   &      &     &  \ddots  &       & \vdots \\
    \vdots &    &  \ddots    &     &    &  \ddots  & c_i \\
    0 & \dots   &  0    &  c_q  &  \dots  & c_i & 1 
\end{bmatrix},
\end{equation}
and $c_{|n-m|}\in\mds{C}$ represents the coupling coefficient of a pair of sensors $n,m\in\bb{G}$. The coefficients depend only on the element separation, leading to a coupling matrix with constant diagonals. Furthermore, they satisfy $c_0=1$ and $\abs{c_j}< \abs{c_i}$ for any $i,j\in\bb{D}$ such that $\abs{i}<\abs{j}$ where $\bb{D}$ is the difference coarray of $\bb{G}$. The coupling limit, represented by $q$, implies that for $i>q$ the coefficient $c_i$ can be neglected ($\abs{c_i}\approx 0$). Note that $q$ is a function of the number of sensors and the sensor separation distance. Here we assume that $q<\max(\bb{G})$ \cite{svantesson1999direction,liu2016super}.

When $\bf C$ is diagonal, the sensors are not coupled with each other. Therefore, the energy of the off-diagonal components of $\bf C$ are used to quantify the mutual coupling as defined below. 
\begin{definition}[Coupling Leakage \cite{liu2016super}]
Consider a sensor array $\bb{G}$ with a mutual coupling matrix $\bf C$.
We define the coupling leakage as
\begin{equation*}
    \mathcal{L}\defeq \frac{\norm{{\bf C}-\diag({\bf C})}_F}{\norm{\bf C}_F}
\end{equation*}
where $\norm{\cdot}_F$ denotes the Frobenius norm and $\diag({\bf C})$ is a matrix constructed by taking $\bf C$ and zeroing the off-diagonal elements.
\end{definition}
\noindent Note that $0\leq \mathcal{L}\leq 1$ and the smaller $\mathcal{L}$ is, the lower the mutual coupling. Under mild conditions, the proposed fractal arrays and their generators have the same coupling leakage, as shown in the following theorem.

\begin{theorem}
\label{theo:coupling}
Let $\bb{F}_r$ be the fractal array generated from $\bb{G}$. Denote by $\mathcal{L}_\bb{G}$ and $\mathcal{L}_r$ the coupling leakage of $\bb{G}$ and $\bb{F}_r$ respectively. Assuming the coupling limit $q$ satisfies $q<\max(\bb{G})$ and $q+\max(\bb{G})<\abs{\bb{U}}$, it holds that
\begin{equation*}
    \mathcal{L}_r=\mathcal{L}_\bb{G}.
\end{equation*}
\end{theorem}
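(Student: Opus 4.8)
The plan is to prove that, under the hypotheses, the mutual coupling matrix $\mathbf{C}_r$ of $\bb{F}_r$ is — after a relabelling of the sensors — block diagonal with $L^{r-1}$ identical blocks, each equal to the coupling matrix $\mathbf{C}_\bb{G}$ of the generator, where $L\defeq\abs{\bb{G}}$. Granting this, $\norm{\mathbf{C}_r}_F^2=L^{r-1}\norm{\mathbf{C}_\bb{G}}_F^2$ and $\norm{\mathbf{C}_r-\diag(\mathbf{C}_r)}_F^2=L^{r-1}\norm{\mathbf{C}_\bb{G}-\diag(\mathbf{C}_\bb{G})}_F^2$, so the common factor $L^{r-1}$ cancels in the ratio defining the coupling leakage and $\mathcal{L}_r=\mathcal{L}_\bb{G}$.

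To obtain the block structure I would first unroll the recursion (\ref{eq:fractal}): every sensor of $\bb{F}_r$ has the form $\sum_{i=0}^{r-1} n_i M^i$ with $n_i\in\bb{G}$ and $M=\abs{\bb{U}}$, so the separation of two sensors equals $\sum_{i=0}^{r-1} k_i M^i$ with each $k_i=n_i-n_i'\in\bb{D}$; since $\min(\bb{G})=0$ the largest lag in $\bb{D}$ is $\max(\bb{G})$, hence $\abs{k_i}\le\max(\bb{G})$. The key step is: if such a separation has modulus at most $q$, write it as $k_0+Ms$ with $s=\sum_{i\ge 1} k_i M^{i-1}$; then $\abs{Ms}\le q+\max(\bb{G})<M$ by hypothesis, forcing $s=0$, so the separation equals $k_0\in\bb{D}$ with $\abs{k_0}\le q$. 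Applying the same estimate recursively to $s$ (now using only $\max(\bb{G})<M$, a consequence of the hypotheses) gives $k_1=\cdots=k_{r-1}=0$. Taking $q=0$ in this computation also shows the $L^r$ expressions $\sum n_i M^i$ are pairwise distinct, so $\abs{\bb{F}_r}=L^r$ and $\bb{F}_r$ is the disjoint union of the $L^{r-1}$ translated copies $\bb{G}+h$, $h=\sum_{i\ge 1} n_i M^i$; moreover two sensors in different copies have $k_1,\dots,k_{r-1}$ not all zero, hence by the contrapositive their separation exceeds $q$ and their coupling coefficient vanishes.

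It then remains to note that the sub-matrix of $\mathbf{C}_r$ indexed by a single copy $\bb{G}+h$ has $(a,b)$-entry $c_{\abs{a-b}}=c_{\abs{(a-h)-(b-h)}}$ with $a-h,b-h\in\bb{G}$, i.e. it is exactly $\mathbf{C}_\bb{G}$; ordering the sensors copy by copy thus turns $\mathbf{C}_r$ into $\mathbf{I}_{L^{r-1}}\otimes\mathbf{C}_\bb{G}$, which is the claimed structure. (Alternatively one can induct on $r$ using Theorem\,\ref{theo:weight}: the convolution expression for $w_r$ together with the above gap estimate gives $w_r(m)=L^{r-1}w_\bb{G}(m)$ for all $\abs{m}\le q$, and $\mathcal{L}$ depends only on these low-lag weights.) The only non-routine point — the expected main obstacle — is the digit-vanishing argument, i.e. checking that the two conditions $q<\max(\bb{G})$ and $q+\max(\bb{G})<\abs{\bb{U}}$ are precisely what prevents the recursion from creating any short-distance sensor pair across scales; here it is important that $\bb{D}$ need not be contained in $\bb{U}$, so the bound on $\abs{k_i}$ must be taken from $\max(\bb{G})$ rather than from $\abs{\bb{U}}$.
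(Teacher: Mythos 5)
Your proposal is correct and follows essentially the same route as the paper: both reduce the claim to the block-diagonal identity $\mathbf{C}_r=\mathbf{I}_{L^{r-1}}\otimes\mathbf{C}_\bb{G}$ and then cancel the common factor $\sqrt{L^{r-1}}$ in the Frobenius-norm ratio defining the coupling leakage. Your digit-vanishing argument additionally supplies a careful justification of the separation claim (the $L^{r-1}$ copies of $\bb{G}$ are non-overlapping and pairwise separated by more than $q$) that the paper simply asserts from the hypotheses $q<\max(\bb{G})$ and $q+\max(\bb{G})<\abs{\bb{U}}$.
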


\begin{proof}
First, for any $N\times N$ matrix $\bf Q$ we have that
\begin{equation*}
    \norm{ \textbf{I}_n\otimes \textbf{Q}}_F=\sqrt{n}\norm{\bf Q}_F,\quad 
    \diag(\textbf{I}_n\otimes \textbf{Q})=\textbf{I}_n\otimes \diag(\textbf{Q}),
\end{equation*}
where $\otimes$ represents the Kronecker product and $\textbf{I}_n$ is the $n\times n$ identity matrix for some $n\in\mds{N}$.

Under the assumptions $q<\max(\bb{G})$ and $q+\max(\bb{G})<\abs{\bb{U}}$, the fractal array $\bb{F}_r$ consists of $\abs{\bb{G}}^{r-1}$ non-overlapping replicas of $\bb{G}$ where each pair of copies are separated by more than $q$. Therefore, sensors from different replicas do not interfere with each other. This induces a block diagonal coupling matrix
\begin{equation}
    {\bf C}_r=
    \begin{bmatrix}
    {\bf C} & {\bf 0} & \dots & \dots & {\bf 0} \\
    {\bf 0} & \ddots & \ddots & & \vdots \\
    \vdots  & \ddots  & {\bf C} & \ddots & \vdots \\
    \vdots &  & \ddots & \ddots &  {\bf 0} \\
    {\bf 0} & \dots & \dots & {\bf 0} & {\bf C}
    \end{bmatrix},
\end{equation}
where $\bf C$ and ${\bf C}_r$ are the coupling matrices of $\bb{G}$ and $\bb{F}_r$ respectively. This relation can be expressed analytically as 
\begin{equation*}
    {\bf C}_r=\textbf{I}_{\tilde{r}}\otimes \textbf{C},
\end{equation*}
where $\tilde{r}\defeq \abs{\bb{G}}^{r-1}$.
Hence, the coupling leakage of $\bb{F}_r$ is
\begin{align*}
    \mathcal{L}_r&\defeq \frac{\norm{{\textbf{C}_r}-\diag({ \textbf{C}_r})}_F}{\norm{\textbf{C}_r}_F} \\
    &=\frac{\norm{{ \textbf{I}_{\tilde{r}} \otimes \textbf{C}}-\diag({\textbf{I}_{\tilde{r}} \otimes \textbf{C}})}_F}{\norm{\textbf{I}_{\tilde{r}} \otimes \textbf{C}}_F} \\
    &=\frac{\norm{{\textbf{I}_{\tilde{r}} \otimes \textbf{C}}-{\textbf{I}_{\tilde{r}} \otimes \diag(\textbf{C})}}_F}{\norm{\textbf{I}_{\tilde{r}} \otimes \textbf{C}}_F} \\
    &=\frac{\norm{\textbf{I}_{\tilde{r}} \otimes \big(\textbf{C}-\diag(\textbf{C})\big)}_F}{\norm{\textbf{I}_{\tilde{r}} \otimes \textbf{C}}_F} \\
    &=\frac{\sqrt{\tilde{r}}}{\sqrt{\tilde{r}}}\frac{\norm{\bf C- \diag(C)}_F}{\norm{\bf C}_F}=\mathcal{L}_\bb{G},
\end{align*}
completing the proof.
\end{proof}

 Unlike previous works, e.g. \cite{svantesson1999direction}, that assume the coupling limit satisfies $q<N$ for an $N$ element ULA, here, we require $q<\max(\bb{G})$ which is a weaker condition, since the number of sensors satisfies  $N\leq\max(\bb{G})$ for non-uniform arrays. Furthermore, $2\cdot\max(\bb{G})<\abs{\bb{U}}$ for most of the known sparse geometries such as coprime arrays and nested arrays, and in particular for any array whose difference coarray is hole free. Therefore, given that $q<\max(\bb{G})$, the majority of existing sparse arrays meet the second assumption in Theorem\,\ref{theo:coupling} of $q+\max(\bb{G})<\abs{\bb{U}}$. 

The result of Theorem\,\ref{theo:coupling} can be used to easily design large sparse arrays with predetermined coupling leakage. To demonstrate this, we use super-nested arrays and present their fractal extension in Fig.\,\ref{fig:coupling}. The coupling coefficients are chosen as $c_1=0.3e^{j\pi/3}$ and $c_i=\frac{c_1}{i}e^{-j(i-1)\pi/8}$ for $2\leq i\leq q$ where $q=\max(\bb{G})-1$.

\begin{figure*}[ht]
 \centering
 \includegraphics[trim={3cm 6.5cm 0cm 6.5cm},clip,height = 4cm, width = 0.8\linewidth]{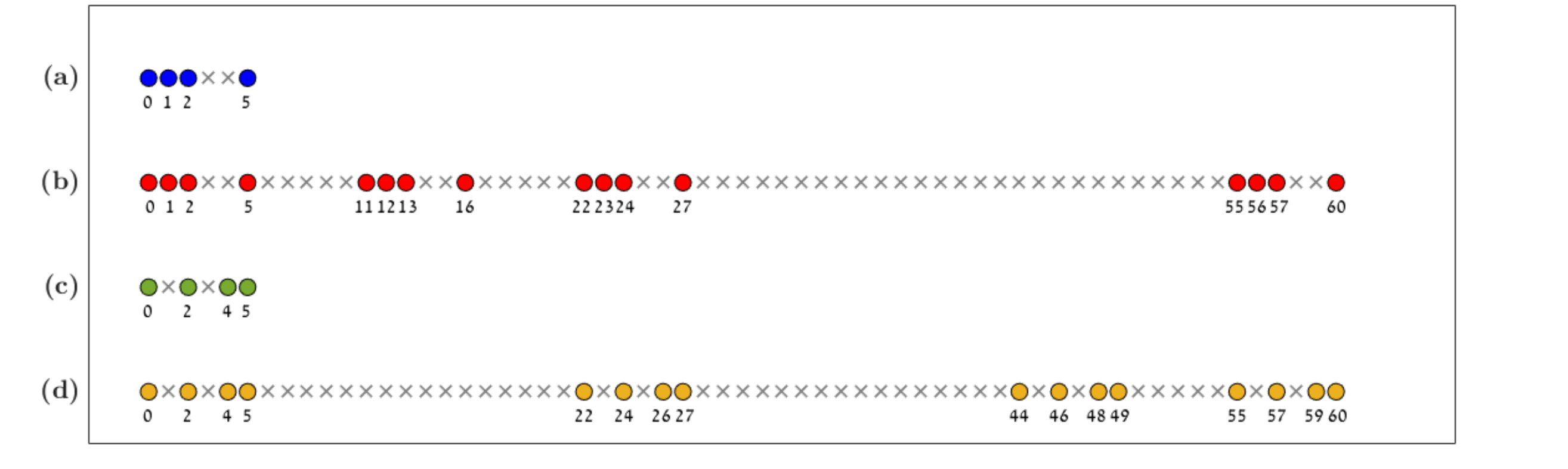}
\caption{\textbf{Mutual Coupling.} (a) A nested array followed by (b) its fractal extension for $r=2$, (c) a super nested array and (d) its fractal extension for $r=2$. The coupling leakage of (a) and (b) is 0.3159, whereas (c) and (d) achieve a lower value of 0.2676.}
  \label{fig:coupling}
 \end{figure*}
 


\vspace{-10 pt}
\subsection{Multi-Generators}
\label{subsec:multi}
Thus far, we have shown the benefits of sparse fractal arrays. However, a possible drawback of the proposed arrays is the exponential growth of the number of sensors with the array order. To circumvent this limitation, we extend (\ref{eq:fractal}) to the following array definition
\begin{align}
\begin{split}
&\bb{M}_0 \defeq \{0\}, \\
&\bb{M}_{r+1} \defeq \bigcup_{n\in\bb{G}_{r+1}} \Big(\bb{M}_r+n\prod_{i=0}^r\abs{\bb{U}_i}\Big),\quad r\in\{0,1,,\dots,R-1\},
\end{split}
\label{eq:multi}
\end{align}
where $\{\bb{G}_r\}_{r=1}^R$ are given generator arrays for a fixed $R>0$. To the best of our knowledge, the use of multiple generators has not been investigated before. In this scheme, a different generator is used at each iteration and the translation factor is determined by the difference coarrays of the generators from previous iterations. When all the generators are identical, the array (\ref{eq:multi}) reduces to (\ref{eq:fractal}), thus it generalizes the latter. Furthermore, it allows the number of sensors to be any composite number, not necessarily a perfect power, which grows gradually with the array order. However, these advantages come at the expense of designing multiple arrays, as each one of the generators may impact the resultant array. Moreover, depending on the chosen generators, the arrays created recursively may not exhibit self-similarity, i.e., they might not be fractal in practice.

In the following we present extensions of Theorem\,\ref{theorem:holefree} and Theorem\,\ref{theo:large} 
for the array configuration (\ref{eq:multi}). Theorem\,\ref{theo:multifree} describes the conditions for which the resultant fractal arrays have hole-free difference arrays, while Theorem\,\ref{theo:multilarge} relates to the size of the difference coarray and the number of DOF. The theorems presented before in regard to other properties, can be generalized in a similar fashion.  

\begin{theorem}
Let $R$ be a fixed positive integer and consider a series of generators $\{\bb{G}_i\}_{i=1}^R$ and their corresponding difference coarrays $\{\bb{D}_{\bb{G}_i}\}_{i=1}^R$. We assume $\bb{D}_{\bb{G}_i}$ is hole-free for any $1\leq i\leq R$.
Let $\bb{F}_r$ be the fractal array created according to (\ref{eq:multi}) with $\{\bb{G}_i\}_{i=1}^R$ for some fixed $r\leq R$. The, the difference coarray $\bb{D}_r$ of $\bb{M}_r$ is hole-free and we have
\begin{equation*}
\bb{D}_r=\left[-\frac{M_r-1}{2},\frac{M_r-1}{2}\right],
\end{equation*}
where $M_r=\prod\limits_{i=1}^r\abs{\bb{D}_{\bb{G}_i}}$ for all $1\leq r\leq R$.
\label{theo:multifree}
\end{theorem}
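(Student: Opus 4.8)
The plan is to mimic the induction used in the proof of Theorem~\ref{theorem:holefree}, but carry along the sequence of generators instead of a single fixed one. I would prove by induction on $r$ (for $1\le r\le R$) the statement that $\bb{D}_r$ is hole-free and equals $\left[-\frac{M_r-1}{2},\frac{M_r-1}{2}\right]$ with $M_r=\prod_{i=1}^r\abs{\bb{D}_{\bb{G}_i}}$. For the base case $r=1$ the array $\bb{M}_1$ is exactly $\bb{G}_1$, whose difference coarray is $\bb{D}_{\bb{G}_1}$; since it is assumed hole-free and contains $0$, it is the symmetric ULA $\left[-\frac{M_1-1}{2},\frac{M_1-1}{2}\right]$ with $M_1=\abs{\bb{D}_{\bb{G}_1}}$.

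For the inductive step, assuming the claim for $r<R$, I would unfold the definition (\ref{eq:multi}). Each element of $\bb{M}_{r+1}$ has the form $a+n\,T_r$ where $a\in\bb{M}_r$, $n\in\bb{G}_{r+1}$, and the translation factor is $T_r=\prod_{i=0}^{r}\abs{\bb{U}_i}$. Here one must observe that $\abs{\bb{U}_i}$ in the definition, which is the size of the central ULA of $\bb{D}_i$ (the coarray of $\bb{M}_i$), equals $M_i$ by the induction hypothesis for $i\ge 1$, while the $i=0$ factor is $\abs{\bb{U}_0}=\abs{\{0\}}=1$; hence $T_r=M_r$. Then, exactly as in Theorem~\ref{theorem:holefree},
\begin{align*}
\bb{D}_{r+1}&=\{(a-b)+(n-n')M_r:\,a,b\in\bb{M}_r,\;n,n'\in\bb{G}_{r+1}\}\\
&=\{p+qM_r:\,p\in\bb{D}_r,\;q\in\bb{D}_{\bb{G}_{r+1}}\}.
\end{align*}
By the induction hypothesis $\bb{D}_r=\left[-\frac{M_r-1}{2},\frac{M_r-1}{2}\right]$ has exactly $M_r$ consecutive integers, and $\bb{D}_{\bb{G}_{r+1}}$ is a hole-free symmetric ULA with $\abs{\bb{D}_{\bb{G}_{r+1}}}$ elements; therefore the shifts $qM_r$ for consecutive $q$ tile adjacent length-$M_r$ blocks with no gaps and no overlaps, so $\bb{D}_{r+1}$ is the contiguous symmetric interval of length $M_r\cdot\abs{\bb{D}_{\bb{G}_{r+1}}}=M_{r+1}$, i.e. $\left[-\frac{M_{r+1}-1}{2},\frac{M_{r+1}-1}{2}\right]$. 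This closes the induction.

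The one genuinely delicate point — and the step I expect to require the most care — is the bookkeeping of the translation factor: reconciling the product $\prod_{i=0}^r\abs{\bb{U}_i}$ appearing in (\ref{eq:multi}) with the quantity $M_r=\prod_{i=1}^r\abs{\bb{D}_{\bb{G}_i}}$ stated in the theorem. This hinges on the identity $\abs{\bb{U}_i}=\abs{\bb{D}_i}=M_i$, which is itself a consequence of the hole-free conclusion at stage $i$, so the induction must be set up so that this fact is available when unfolding stage $r+1$. Everything else is the same "consecutive replicas" argument already used for the single-generator case, with $\bb{D}$ replaced by $\bb{D}_{\bb{G}_{r+1}}$; no new combinatorial idea is needed. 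I would also note in passing that the conclusion gives $\abs{\bb{M}_r}=\prod_{i=1}^r\abs{\bb{G}_i}$ whenever each generator has a hole-free coarray, paralleling the remark after Theorem~\ref{theorem:holefree}.
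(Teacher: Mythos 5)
Your induction is the same as the paper's (Appendix A): base case $\bb{M}_1=\bb{G}_1$, then $\bb{D}_{r+1}=\{p+qM_r:\,p\in\bb{D}_r,\;q\in\bb{D}_{\bb{G}_{r+1}}\}$ and the "consecutive, non-overlapping replicas" tiling argument. That part is fine and is exactly what the paper does (the paper in fact simply writes the translation factor as $M_r$ without comment).

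However, the one step you single out as delicate — reconciling $\prod_{i=0}^{r}\abs{\bb{U}_i}$ with $M_r$ — is handled incorrectly as written. You assert that $\abs{\bb{U}_i}$ is the size of the central ULA of the coarray of $\bb{M}_i$, that it equals $M_i$ by the induction hypothesis, and that "hence $T_r=M_r$." But under that reading $T_r=\prod_{i=0}^{r}\abs{\bb{U}_i}=1\cdot M_1\cdot M_2\cdots M_r$, which is not $M_r$ (e.g.\ with two generators each having a coarray of size $3$, you would get $3\cdot 9=27$ rather than $M_2=9$); the same reading also fails to reduce to the single-generator translation factor $M^r$ when all generators coincide. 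The intended (and consistent) reading of (\ref{eq:multi}) is that $\bb{U}_i$ denotes the central ULA of the $i$th \emph{generator's} coarray, $\bb{U}_{\bb{G}_i}$, with $\bb{U}_0=\{0\}$ as the degenerate factor. Then hole-freeness gives $\bb{U}_{\bb{G}_i}=\bb{D}_{\bb{G}_i}$, so $T_r=\prod_{i=1}^{r}\abs{\bb{D}_{\bb{G}_i}}=M_r$ directly, with no appeal to the induction hypothesis at all. With that correction the rest of your argument, including the closing remark that $\abs{\bb{M}_r}=\prod_{i=1}^{r}\abs{\bb{G}_i}$, goes through.
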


\begin{proof}
See Appendix\,\ref{app:mutlifree}.
\end{proof}

\begin{theorem}
Let $R$ be a fixed positive integer and consider a series of generators $\{\bb{G}_i\}_{i=1}^R$. We denote by $\{\bb{D}_{\bb{G}_i}\}_{i=1}^R$ the corresponding difference coarrays and their central ULAs by $\{\bb{U}_{\bb{G}_i}\}_{i=1}^R$. Let $\bb{F}_r$ be the fractal array created according to (\ref{eq:multi}) with $\{\bb{G}_i\}_{i=1}^R$ for some fixed $r\leq R$. Assuming $\abs{\bb{U}_{\bb{G}_i}}=\mathcal{O}(\abs{\bb{G}_i}^2)$ for all $1\leq i\leq r$, the difference coarray $\bb{D}_r$ of $\bb{M}_r$ satisfies
\begin{equation*}
\abs{\bb{D}_r}=\mathcal{O}(N^2)
\end{equation*}
for all $1\leq r\leq R$, where $N\leq\prod\limits_{i=1}^r \abs{\bb{G}}_i$ is the number of physical sensors in $\bb{M}_r$.
\label{theo:multilarge}
\end{theorem}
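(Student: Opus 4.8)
The plan is to mimic the proof of Theorem\,\ref{theo:large}, with the single scaling factor $M^r$ replaced by the running product of the generators' central-ULA sizes. Write $\tau_r\defeq\prod_{i=0}^r\abs{\bb{U}_i}=\prod_{i=1}^r\abs{\bb{U}_{\bb{G}_i}}$ for the translation factor used in~(\ref{eq:multi}) to pass from $\bb{M}_r$ to $\bb{M}_{r+1}$ (here $\bb{U}_i$ denotes the central ULA of $\bb{D}_{\bb{G}_i}$, $\abs{\bb{U}_0}=1$, so $\tau_0=1$), and let $\bb{V}_r$ be the central ULA of $\bb{D}_r$. The first step is the difference-coarray recursion: every element of $\bb{M}_{r+1}$ equals $s+u\tau_r$ for some $s\in\bb{M}_r$, $u\in\bb{G}_{r+1}$, so a difference of two such elements is $(s-t)+(u-v)\tau_r$ with $s,t\in\bb{M}_r$ and $u,v\in\bb{G}_{r+1}$, giving
\begin{equation*}
\bb{D}_{r+1}=\{\,m+n\tau_r:\ m\in\bb{D}_r,\ n\in\bb{D}_{\bb{G}_{r+1}}\,\},
\end{equation*}
exactly as in the step of Theorem\,\ref{theo:large}.

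Next I would show by induction on $r$ that $\bb{T}_r\defeq\left[-\tfrac{\tau_r-1}{2},\tfrac{\tau_r-1}{2}\right]\subseteq\bb{V}_r$, which already forces $\abs{\bb{V}_r}\ge\tau_r$. The base case $r=1$ is immediate: $\bb{M}_1=\bb{G}_1$, hence $\bb{D}_1=\bb{D}_{\bb{G}_1}$ and $\bb{V}_1=\bb{U}_{\bb{G}_1}$, which is a symmetric contiguous ULA of size $\tau_1=\abs{\bb{U}_{\bb{G}_1}}$, i.e. $\bb{V}_1=\bb{T}_1$. For the inductive step, assume $\bb{T}_r\subseteq\bb{V}_r$ and put $\bb{Y}_r\defeq\{\,m+n\tau_r:\ m\in\bb{T}_r,\ n\in\bb{U}_{\bb{G}_{r+1}}\,\}$. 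The key point is that $\abs{\bb{T}_r}=\tau_r$ equals the translation factor, so the copies $\bb{T}_r+n\tau_r$ for consecutive $n$ line up without gaps; since $\bb{U}_{\bb{G}_{r+1}}$ is itself a symmetric contiguous ULA, this yields $\bb{Y}_r=\left[-\tfrac{\tau_{r+1}-1}{2},\tfrac{\tau_{r+1}-1}{2}\right]=\bb{T}_{r+1}$, with $\tau_{r+1}=\tau_r\abs{\bb{U}_{\bb{G}_{r+1}}}$. On the other hand, $\bb{T}_r\subseteq\bb{V}_r\subseteq\bb{D}_r$ and $\bb{U}_{\bb{G}_{r+1}}\subseteq\bb{D}_{\bb{G}_{r+1}}$, so the recursion gives $\bb{Y}_r\subseteq\bb{D}_{r+1}$; being a contiguous ULA through the origin, $\bb{Y}_r\subseteq\bb{V}_{r+1}$, completing the induction.

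It then remains to assemble the estimate. Since $\bb{V}_r\subseteq\bb{D}_r$, we obtain $\abs{\bb{D}_r}\ge\abs{\bb{V}_r}\ge\abs{\bb{T}_r}=\tau_r=\prod_{i=1}^r\abs{\bb{U}_{\bb{G}_i}}$. Invoking the hypothesis $\abs{\bb{U}_{\bb{G}_i}}=\mathcal{O}(\abs{\bb{G}_i}^2)$ (quadratic growth in $\abs{\bb{G}_i}$) and using that $r$ is fixed, $\tau_r$ is of order $\big(\prod_{i=1}^r\abs{\bb{G}_i}\big)^2$; as $N\le\prod_{i=1}^r\abs{\bb{G}_i}$ and a difference coarray of an $N$-element array never has more than $N^2$ lags, we conclude $\abs{\bb{D}_r}=\mathcal{O}(N^2)$ for every $1\le r\le R$.

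The main obstacle I anticipate is the bookkeeping of the translation factor rather than any single calculation: one has to confirm that $\abs{\bb{T}_r}$, the length of the contiguous ULA built at level $r$, is exactly $\tau_r=\prod_{i=1}^r\abs{\bb{U}_{\bb{G}_i}}$, which is precisely the translation amount appearing in~(\ref{eq:multi}) at that step. This exact coincidence — not merely $\tau_r\le\abs{\bb{T}_r}$ — is what makes the shifted copies in $\bb{Y}_r$ tile without holes and the central ULA grow multiplicatively by a factor $\abs{\bb{U}_{\bb{G}_{r+1}}}$ at each stage; a larger translation would create gaps and break the argument. A secondary, routine point is to keep the generator central ULAs $\bb{U}_{\bb{G}_i}$ (which drive the translations) distinct from the fractal-array central ULAs $\bb{V}_r$, noting that only the inclusion $\bb{T}_r\subseteq\bb{V}_r$ is needed, exactly as in the proof of Theorem\,\ref{theo:large}.
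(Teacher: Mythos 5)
Your proposal is correct and follows essentially the same route as the paper's proof in Appendix B: the same coarray recursion $\bb{D}_{r+1}=\{m+n\tau_r\}$, the same inductive claim $\bb{T}_r\subseteq\bb{V}_r$ with the same exact-tiling observation that $\abs{\bb{T}_r}$ equals the translation factor, and the same final bound $\abs{\bb{D}_r}\geq\tau_r=\mathcal{O}(N^2)$. The only (harmless) difference is that you pass directly from $\bb{Y}_r\subseteq\bb{D}_{r+1}$ to $\bb{Y}_r\subseteq\bb{V}_{r+1}$ via the definition of the central ULA, whereas the paper routes this through an intermediate set built from the full $\bb{V}_r$ rather than $\bb{T}_r$.
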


\begin{proof}
See Appendix\,\ref{app:multilarge}.
\end{proof}

Theorems \ref{theo:multifree} and \ref{theo:multilarge} show the generalized arrays (\ref{eq:multi}) fulfill Criteria\,\ref{criterion:holefree} and \ref{criterion:largearray}. The major benefit of (\ref{eq:multi}) is that it allows to combine diverse sparse geometries. The generators and their order need to be designed appropriately, since they affect the properties of the resultant fractal arrays, as shown in Fig.\,\ref{fig:multi}. For example, it can be verified that the fractal arrays are symmetric when all the generators are symmetric.   

\begin{figure*}[ht]
 \centering
 \includegraphics[trim={0cm 6.5cm 0cm 6.5cm},clip,height = 5cm, width = 1\linewidth]{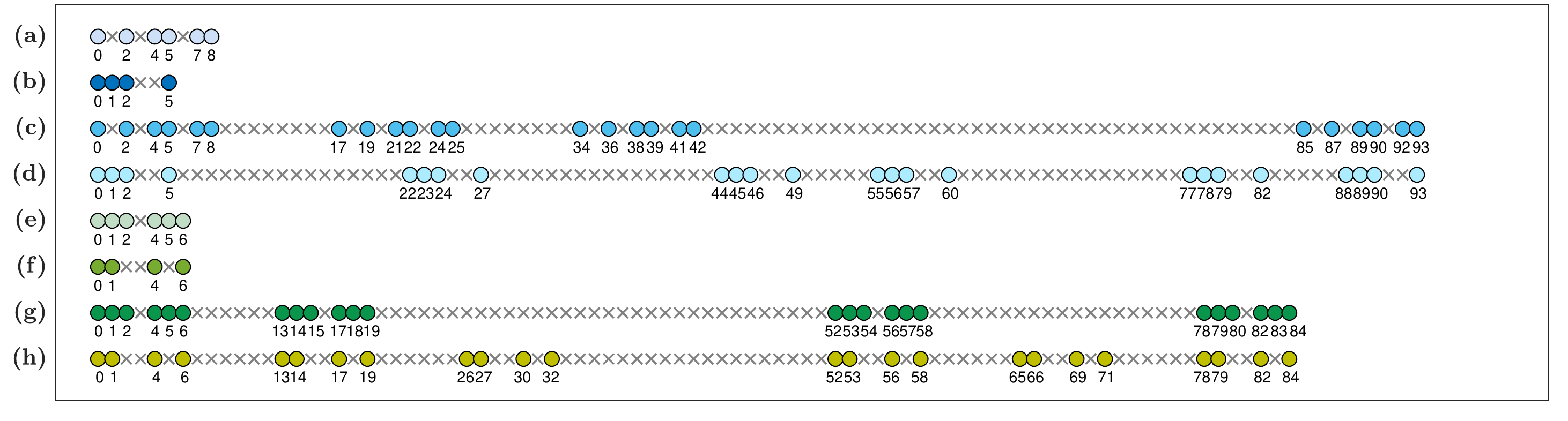}
\caption{\textbf{Multi-Generators.} (a) Super nested array, (b) nested array, (c) the fractal composition of (a) and (b), (d) the fractal composition of (b) and (a), (e) RMRA, (f) maximally economic array, (g) the fractal composition of (e) and (f), (h) the fractal composition of (f) and (a). The coupling leakage of (a)-(d) is 0.3016, 0.3159, 0.3016 and 0.3168 respectively. The fragility of (e)-(d) is 1/3, 1, 1/3 and 1/3 respectively.    }
  \label{fig:multi}
 \end{figure*}
 

\section{Numerical Experiments}
\label{sec:experiments}
Here we demonstrate the benefits of the proposed fractal scheme in designing large sparse arrays with multiple properties. We provide an analysis of the fractal arrays in comparison to several well-known sparse arrays mentioned earlier. 
Throughout the experiments, we assume that array motion \cite{qin2019doa}, virtual array interpolation \cite{zhou2018direction} and decoupling methods are not involved. 

We consider a representative array design with the following requirements: 
\begin{itemize}
    \item[(R1)] Symmetric array,
    \item[(R2)] Contiguous difference coarray (hole-free),
    \item[(R3)] Large difference coarray,
    \item[(R4)] Robustness to sensor failures -  $F\leq 0.3$,
    \item[(R5)] Mutual coupling - $\mathcal{L}\leq 1/3$,
    \item[(R6)] Constrained Aperture - $A \leq 840d$,
\end{itemize}
where $F$ is the array fragility, $\mathcal{L}$ is the array coupling leakage and $A$ denotes the size of the physical array aperture. We assume for simplicity that the element spacing is $d=\frac{\lambda}{2}=1$. For the coupling coefficients, we first parametrize $\abs{c_1}$ and then determine $\abs{c_2}$,...,$\abs{c_q}$ where $q=15$ assuming that the magnitudes of the coefficients are inversely proportional to the sensor separation $\left(\frac{\abs{c_j}}{\abs{c_i}}=\frac{i}{j}\right)$. The phases of the coefficients are drawn uniformly at random from $[-\pi,\,\pi)$.
Note that, in general, we may include in the design some conditions on the weight function or the beampattern.

Our task is to construct a large sparse array which fulfills the above requirements. A direct approach is to choose one out of the many state-of-the-art sparse configurations shown before such as MISC and RECA. However, these arrays do not meet the required specifications. To see this, we provide in Fig.\,\ref{fig:arraycomparison} a comparison between various known sparse arrays where each point corresponds to a different array, spatially positioned according to the array coupling leakage and fragility. As clearly seen, all of the aforementioned sparse arrays are out of the feasible region marked in green and determined by (R4)-(R5). Moreover, most of the discussed arrays exhibit low mutual coupling and high fragility. The reason for that lies in the fact that the design of these arrays focuses in redistributing the elements and sparsifying the array to obtain high DOF and low mutual coupling, which at the same time reduces their robustness. Hence, the challenge is to build a symmetric array which is relatively robust to sensor failures and exhibits low mutual coupling leakage.

\begin{figure*}[ht]
 \centering
 \includegraphics[trim={2.8cm 3.5cm 2cm 4cm},clip,height = 6.7cm, width = 0.8\linewidth]{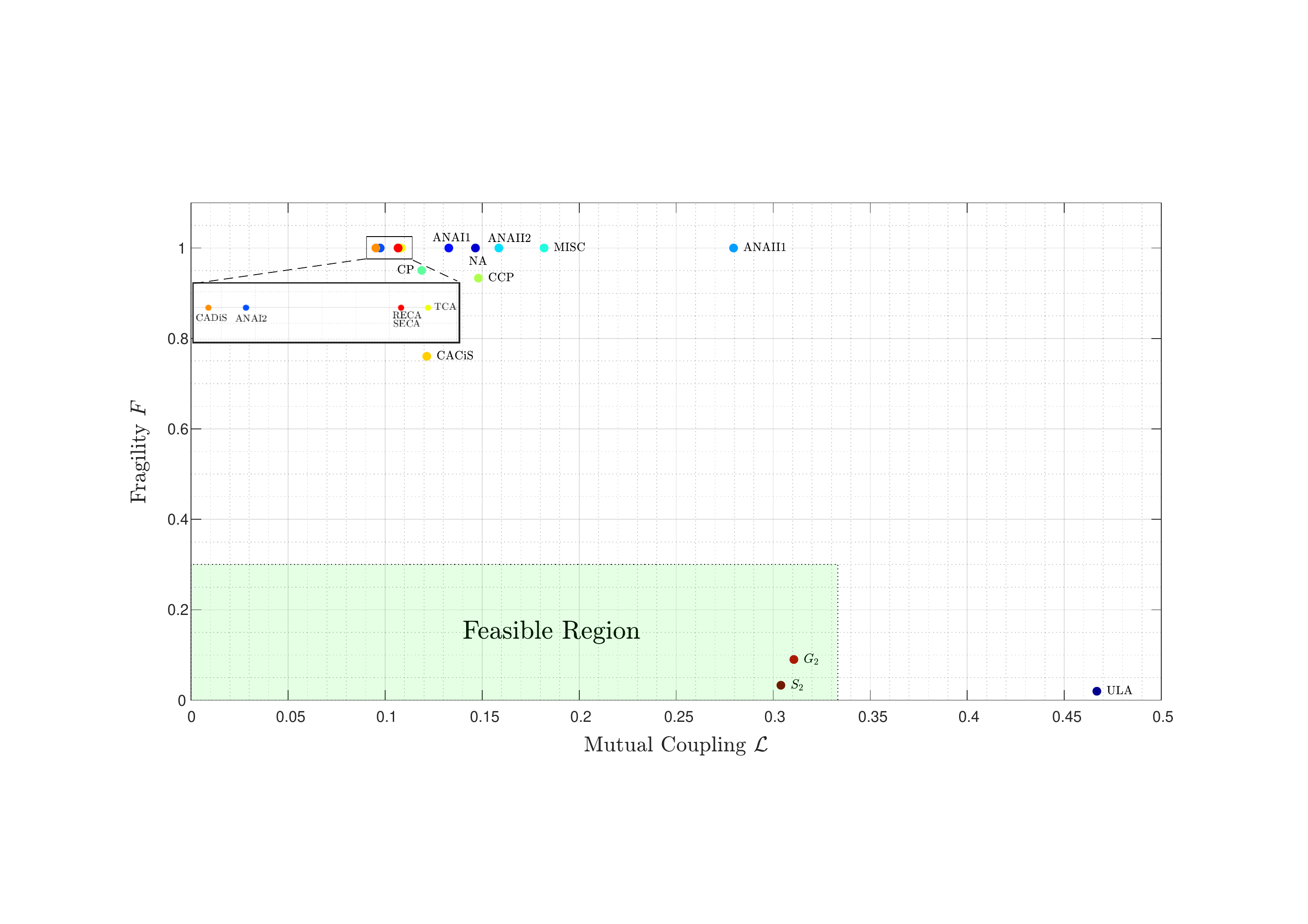}
\caption{\textbf{Array Comparison.} Displaying the tradeoff between coupling leakage and array fragility for various sparse configurations. The area colored in green marks the feasible region in which the mutual coupling is less than 1/3 and the fragility is below 0.3. For better visualization, we provide an enlargement of the small enclosed region in the large enclosed frame, connected by dashed lines. We set the array parameters for NA, SNA, ANA variants, and MISC as $N_1=8,\, N_2=92$, while for CP, CCP, TCA, CACis, CADis, SECA and RECA we choose $M=5,\,N=92$. The description of each array parameter can be found in the respective manuscripts cited throughout our paper.}
  \label{fig:arraycomparison}
 \end{figure*}

To that end, we consider our fractal scheme and first seek a small
generator array which meets the above specifications by solving the following problem
\begin{equation}
        \bb{S} = \underset{\bb{T}}{\arg\min}\; \abs{\bb{T}}\quad 
        \text{subject to}\quad\text{(R1)-(R5)}, A_\bb{T}\leq 20d,
        \tag{P1}
\end{equation}
where $A_\bb{T}$ denotes the physical aperture of the arrays. We seek an array with the fewest elements to  promote economy. Problem (P1) can be solved by e.g. performing a naive exhaustive search over all $2^{20}$ possible arrays. This of course is possible only for small scale. A solution of problem (P1) is
\begin{equation*}
    \bb{S} = [0\;\;1\;\;2\;\; 4\;\; 7\;\; 10\;\; 13\;\; 16\;\; 18\;\;19\;\;20],
\end{equation*}
which exhibits fragility of $F_\bb{S}=0.27$ and coupling leakage of $\mathcal{L}_\bb{S}=0.3$. In addition, we solve (P1) ignoring the symmetry constraint, leading to the following solution:
\begin{equation*}
    \bb{G} = [0\;\;1\;\; 3\;\; 5\;\; 11\;\; 13\;\; 17\;\;18\;\;19\;\;20].
\end{equation*}
We later use the array $\bb{G}$ for additional comparison.

Now, we apply our recursive scheme \eqref{eq:fractal} with $r=2$ where we utilize $\bb{S}$ and $\bb{G}$ as generators to create the sparse fractal arrays $\bb{S}_2$ and $\bb{G}_2$ respectively. As proven throughout the paper, the array construction \eqref{eq:fractal} guarantees that $\bb{S}_2$ meets requirements (R1)-(R6) as well as $\bb{G}_2$ excluding symmetry. This is seen in Fig.\,\ref{fig:arraycomparison} where we can observe that both $\bb{S}_2$ and $\bb{G}_2$ are located inside the feasible region. Thus, the fractal design allows us to create large sparse arrays which demonstrate low mutual coupling and low fragility simultaneously. 

To further study the fractal arrays we analyse their performance in DOA estimation, using the coarray
MUSIC algorithm, in comparison with commonly used sparse arrays: nested arrays (NA), super nested arrays (SNA), extended coprime array (CP) and complementary coprime arrays (CCP). We perform the comparison in small scale, i.e. small apertures with few elements, and in large scale while we examine three performance aspects - mutual coupling, robustness to sensor failures and sensitivity to noise. To evaluate the latter we use increasing levels of signal-to-noise ratio (SNR), while for mutual coupling we rely on the model described earlier with increasing values of $\abs{c_1}$. To test robustness to sensor failures, we assume each sensor fails
independently with probability $p$ and we assess performance as a function of $p$. For the small scale scenario, we consider $K=20$ sources with unit-amplitudes and normalized DOA $\bar{\theta}_i\triangleq \sin(\theta_i)/2\in[-0.5,\,0.5]$ equally spaced in the range $[-0.45,\,0.45]$. In addition, we set the array parameters of NA and SNA as $N_1=N_2=4$ \cite{liu2016super}, whereas the parameters of CP and CPP \cite{wang2019hole} are $M=3,\, N=4$. We use a similar setup for the large scale scenario but assume $K=400$ unit-amplitude sources with normalized  DOA equally-distributed in the aforementioned range. The parameters of NA and SNA are $N_1=8,\,N_2=92$ while for CP and CPP we use $M=5,\,N=92$. A summary of the properties of the arrays tested is given in Table\,\ref{tab:arrayparams}.    
The number of snapshots is 1000 for all cases. We assess each array by applying coarray MUSIC to compute the estimated source directions $\hat{\bar{\theta}}_i$ and calculating the root-mean squared error RMSE=$\big(\sum_{i=1}^K (\bar{\theta}_i-\hat{\bar{\theta}}_i)^2/K\big)^{1/2}$
averaged over 500 Monte-Carlo runs. Note that as $p$ increases more and more sensors malfunction, compromising the array structure, so that MUSIC might fail to yield any estimation. Therefore, as in \cite{liu2019robustness}, we only collect and average the instances in which coarray MUSIC was able to produce an estimation. To implement coarray MUSIC we utilize the online available code \cite{MUSICcode} used in, e.g., \cite{liu2016super, liu2019robustness}.

 \begin{table}
 \footnotesize
  \begin{center}
    \begin{tabular}{|c|c|c|c|}
      \hline 
      \rowcolor{LightBlue}
      \textbf{Array} & \textbf{\#Sensors} & \textbf{Fragility} & \textbf{Coupling Leakage} \\
      \hline 
      \rowcolor{LightGreen}
     \multicolumn{4}{|c|}{Small Scale} \\ 
     \hline
     NA       & 8  & 1    & 0.32 \\
     SNA      & 8  & 1    & 0.25 \\
     CP       & 9  & 2/3  & 0.26 \\
     CCP      & 11 & 0.45 & 0.30 \\  
     $\bb{S}$ & 11 & 0.27 & 0.30 \\ 
     $\bb{G}$ & 10 & 0.30  & 0.31 \\ 
     \hline
     \rowcolor{LightGreen}
     \multicolumn{4}{|c|}{Large Scale} \\ 
     \hline
     NA         & 100  & 1     & 0.14 \\
     SNA        & 100  & 1     & 0.01 \\
     CP         & 101  & 0.95  & 0.11 \\
     CCP        & 105  & 0.93  & 0.14 \\  
     $\bb{S}_2$ & 121  & 0.03  & 0.30 \\ 
     $\bb{G}_2$ & 100  & 0.09  & 0.31 \\
     $\bb{S}_3$ & 1331 & 0.006 & 0.30 \\
     $\bb{G}_3$ & 1000 & 0.027 & 0.31 \\
     \hline 
    \end{tabular}
     \caption{Array Properties.}
     \label{tab:arrayparams}
  \end{center}
\end{table}

 The simulation results, shown in Fig.\,\ref{fig:experiments}, demonstrate that the optimized arrays $\bb{S}$ and $\bb{G}$ are more robust than the other arrays in all scenarios, achieving the lowest errors. As seen, when the element coupling is low, $\bb{S}$ and $\bb{G}$ lead to small errors which increase as $\abs{c_1}$ increases until high mutual coupling is reached and their performance is comparable to that obtained by the other arrays. Moreover, $\bb{S}$, $\bb{G}$ and CCP are less sensitive to noise than the alternative arrays as they obtain low errors in low SNR regions and achieve considerably better performance as SNR increases. Finally, $\bb{S}$, $\bb{G}$ outperform the other configurations, including CCP, when the probability for sensor failure exceeds a certain point, demonstrating the robustness of the optimized arrays.  
 
 While we showed that $\bb{S}$ and $\bb{G}$ are superior to other investigated arrays, our actual goal is to show this performance is maintained when we scale up the arrays to create $\bb{S}_2$ and $\bb{G}_2$. To that end, we examine the results on the bottom of Fig.\,\ref{fig:experiments}. We observe that the fractal arrays yield low errors when the mutual coupling is low, while in high coupling regions their performance is slightly degraded yet comparable to that of the other arrays. As clearly seen, both $\bb{S}_2$ and $\bb{G}_2$ surpass the alternative arrays for different probabilities of failure and in different SNR regimes, leading to considerably lower errors. Above a probability of failure of $p=0.2$, the fractal arrays lead to acceptable errors while for the other arrays MUSIC failed to produce DOA estimations. These results coincide with the properties given in Table\,\ref{tab:arrayparams} where we see that while all arrays exhibit relatively low coupling leakage, our fractal arrays are dramatically more robust than the other arrays.
  
 These experiments prove the effectiveness and simplicity of the proposed approach for constructing large sparse arrays while considering diverse specifications. We can continue and enlarge our generators further to create sparse fractal arrays with thousands of elements, as shown in Table\,\ref{tab:arrayparams}, which are expected to be required by applications such as massive MIMO and ultrasound imaging in the forthcoming years. 
 

 \begin{figure*}[ht]
 \centering
 \includegraphics[trim={1.5cm 3.5cm 1cm 3.5cm},clip,height = 8cm, width = 0.9\linewidth]{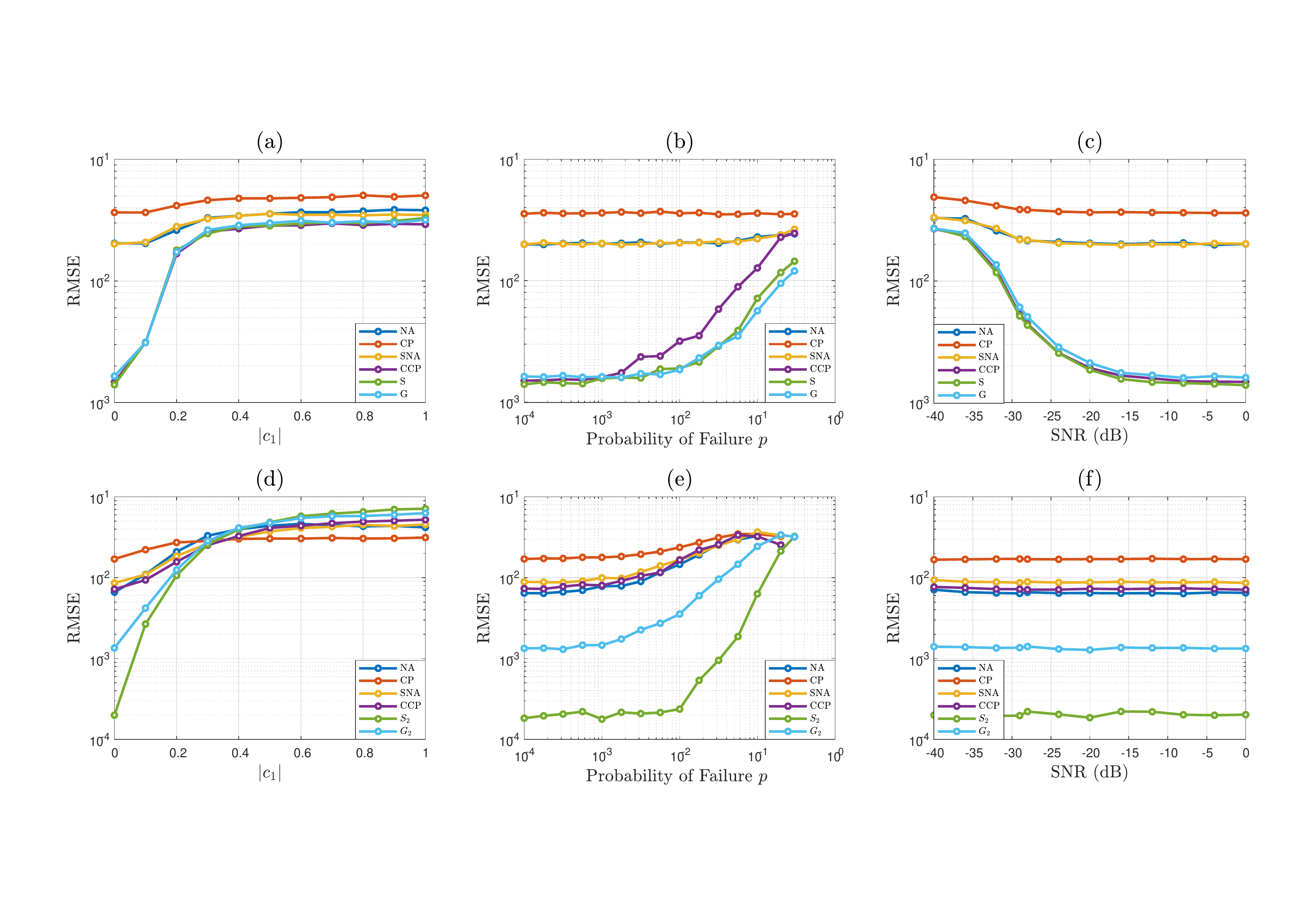}
\caption{\textbf{Experiments.} Performance comparison of the arrays described in Section\,\ref{sec:experiments} for the task of DOA estimation in three different scenarios: (left) increasing mutual coupling, (middle) increasing probability to sensor failure and (right) increasing SNR. The results on the top line were obtained with small scale arrays while those on the bottom were attained using large scale arrays whose properties are given in Table\,\ref{tab:arrayparams}. For (a), (b), (d) and (e) we used SNR=0dB.}
  \label{fig:experiments}
 \end{figure*}


\section{Conclusion}
\label{sec:conclude}
The design of large sparse arrays poses a major challenge. Various sparse geometries have been proposed over the last decades. However, most of these designs focus on certain aspects of the array while ignoring or being indifferent to other important properties. Incorporating all desired design criteria leads to combinatorial problems which currently cannot be solved efficiently in large scale.    

In this paper, we introduce a fractal scheme in which we use a sparse array as a generator and we expand it recursively according to its difference coarray. We proved that for an appropriate choice of the generator, the proposed design creates sparse fractal arrays with increased degrees of freedom, i.e., large difference coarrays. Thus, we can extend any known sparse configuration to an arbitrarily large array. Moreover, we presented a detailed analysis of the fractal arrays with respect to several important array characteristics. The analysis showed that fractal arrays inherit from their generators properties such as symmetry, array economy, mutual coupling and robustness to sensor failures. The array weight function and beampattern can also be easily derived from the generator. In addition, we presented a generalized fractal scheme that allows to combine different sparse geometries in which the number of sensors can grow moderately with the array order. 

Finally, we perform numerical experiments to demonstrate the practicality of the proposed fractal scheme. We outline a representative design plan which requires the array to be symmetric and robust to sensor failures while exhibiting low mutual coupling. As shown, most popular sparse configurations do not meet these requirements as they were designed to achieve high DOF which increases their fragility at the same time. We then constructed fractal arrays using our design scheme which display low coupling leakage and low fragility simultaneously. We evaluate the performance of our fractal arrays in comparison with several common sparse arrays, showing their superiority in various scenarios. Thus, this work provides a simple and scalable fractal approach for designing large scale sparse arrays with multiple properties.              

\vspace{-5 pt}
\appendices
\section{Proof of Theorem\,\ref{theo:multifree}}
\label{app:mutlifree}
We prove the theorem by induction.
\begin{itemize}
\item \textbf{Base} ($k=1$): In this case  $\bb{M}_1=\bb{G}_1$. Hence, $\bb{D}_1=\bb{D}_{\bb{G}_1}$ and it can be written as 
\begin{equation*}
\bb{D}_1=\bb{D}=\left[-\frac{M-1}{2},\frac{M-1}{2}\right],
\end{equation*}
where $M=\abs{\bb{D}_{\bb{G}_1}}$ since we assume that $\bb{D}_{\bb{G}_1}$ is hole-free.
\item \textbf{Assumption} ($k=r$): $\bb{D}_r$ is a hole-free array given by 
\begin{equation*}
    \bb{D}_r=\left[-\frac{M_r-1}{2},\frac{M_r-1}{2}\right],
\end{equation*}
where $M_r=\prod\limits_{i=1}^r\abs{\bb{D}_{\bb{G}_i}}$.
\item \textbf{Step} ($k=r+1$): The difference coarray $\bb{D}_{\bb{G}_{r+1}}$ of $\bb{G}_{r+1}$ is assumed to be hole-free, hence,
\begin{equation*}
\bb{D}_{\bb{G}_{r+1}}=\left[-\frac{\abs{\bb{D}_{\bb{G}_{r+1}}}-1}{2},\frac{\abs{\bb{D}_{\bb{G}_{r+1}}}-1}{2}\right].
\end{equation*}
By definition of the difference coarray, we have
{\small
\begin{align*}
\bb{D}_{r+1}&\defeq \{k-l:\, k,l\in\bb{M}_{r+1}\} \\
&=\{s+uM_r-(t+vM_r): s,t\in\bb{M}_r,u,v\in\bb{G}_{r+1}\} \\
&=\{(s-t)+(u-v)M_r: s,t\in\bb{M}_r,u,v\in\bb{G}_{r+1}\} \\
&=\{m+nM_r:\, m\in\bb{D}_r,\,n\in\bb{D}_{\bb{G}_{r+1}}\}.
\end{align*}}
\hspace{-9pt}
Since $\bb{D}_{\bb{G}_{r+1}}$ is hole-free and $M_r=\abs{\bb{D}_r}$, we have that $\bb{D}_{r+1}$ consists of $l=\abs{\bb{D}_{\bb{G}_{r+1}}}$ consecutive replicas of $\bb{D}_r$:
\begin{equation*}
    \bb{D}_{r+1} = \underbrace{[\bb{D}_r\;\bb{D}_r\;\dots\;\bb{D}_r]}_{l\text{ times}}.
\end{equation*}

By our assumption $\bb{D}_r$ is hole-free, implying that $\bb{D}_{r+1}$ is hole-free and is given by
\begin{equation*}
    \bb{D}_{r+1}=\left[-\frac{M_{r+1}-1}{2},\frac{M_{r+1}-1}{2}\right],
\end{equation*}
where 
\begin{align*}
   M_{r+1}\defeq\abs{\bb{D}_{r+1}}&=\abs{\bb{D}_r}\cdot\abs{\bb{D}_{\bb{G}_{r+1}}}=M_r\cdot\abs{\bb{D}_{\bb{G}_{r+1}}} \\ &=\Big(\prod\limits_{i=1}^r\abs{\bb{D}_{\bb{G}_i}}\Big)\cdot\abs{\bb{D}_{\bb{G}_{r+1}}}=\prod\limits_{i=1}^{r+1}\abs{\bb{D}_{\bb{G}_i}},
\end{align*}
completing the proof.
\end{itemize}

\vspace{-5 pt}
\section{Proof of Theorem\,\ref{theo:multilarge}}
\label{app:multilarge}
Denoting the central ULA of $\bb{D}_r$ by $\bb{U}_r$, we first prove by induction that
\begin{equation*}
    \left[-\frac{M_r-1}{2},\frac{M_r-1}{2}\right]\subseteq \bb{U}_r,
\end{equation*}
where $M_r\triangleq\prod\limits_{i=1}^r\abs{\bb{U}_{\bb{G}_i}}$. In particular, $\abs{\bb{U}_r}=\mathcal{O}(M_r)$.
\begin{itemize}
\item \textbf{Base} ($k=1$):
In this case  $\bb{M}_1=\bb{G}_1$. Hence, $\bb{U}_1=\bb{U}_{\bb{G}_1}$ and it can be written as 
\begin{equation*}
\bb{U}_1=\bb{U}=\left[-\frac{M-1}{2},\frac{M-1}{2}\right],
\end{equation*}
where $M=\abs{\bb{U}_{\bb{G}_1}}$ since by definition $\bb{U}_{\bb{G}_1}$ is hole-free and symmetric.
\item \textbf{Assumption} ($k=r$): Assume that
\begin{equation*}
    \left[-\frac{M_r-1}{2},\frac{M_r-1}{2}\right]\subseteq \bb{U}_r.
\end{equation*}
\item \textbf{Step} ($k=r+1$): We define the following sets 
\begin{align*}
    &\bb{T}_r\triangleq  \left[-\frac{M_r-1}{2},\frac{M_r-1}{2}\right], \\
    &\bb{Y}_r\triangleq \{m+nM_r:\, m\in\bb{T}_r,\,n\in\bb{U}_{\bb{G}_{r+1}}\}, \\
    &\bb{V}_r\triangleq \{m+nM_r:\, m\in\bb{U}_r,\,n\in\bb{U}_{\bb{G}_{r+1}}\}.
\end{align*}
Notice that $\bb{Y}_r$ can be written in explicit form as 
\begin{equation*}
        \bb{Y}_r=\left[-\frac{M_{r+1}-1}{2},\frac{M_{r+1}-1}{2}\right],
\end{equation*}
where 
\begin{align*}
    M_{r+1}=\abs{\bb{T}_r}\cdot\abs{\bb{U}_{\bb{G}_{r+1}}}
    =M_r\cdot\abs{\bb{U}_{\bb{G}_{r+1}}}
    =\prod\limits_{i=1}^{r+1}\abs{\bb{U}_{\bb{G}_i}}.
\end{align*}
In addition, both $\bb{U}_r$ and $\bb{U}_{\bb{G}_{r+1}}$ are symmetric and hole-free arrays where $M_r\leq\abs{\bb{U}_r}$. Therefore, by the construction of $\bb{V}_r$, we have that
$\bb{U}_r\subseteq \bb{V}_r$ and $\bb{V}_r$ is symmetric and hole-free. Similar to proof of Theorem\,\ref{theo:multifree}, we can express the difference coarray
\begin{align*}
\bb{D}_{r+1}=\{m+nM_r:\, m\in\bb{D}_r,\,n\in\bb{D}_{\bb{G}_{r+1}}\}.
\end{align*}
As $\bb{U}_r\subseteq\bb{D}_r$ and $\bb{U}_{\bb{G}_{r+1}}\subseteq \bb{D}_{\bb{G}_{r+1}}$, we get that $\bb{V}_r\subseteq \bb{D}_{r+1}$. This suggests that $\bb{V}_r\subseteq \bb{U}_{r+1}$ since by the definition of the central ULA, $\bb{U}_{r+1}$ is the longest symmetric hole-free array in the difference coarray.

By the induction assumption, $\bb{T}_r\subseteq\bb{U}_r$ implying that $\bb{Y}_r\subseteq \bb{V}_r$, which in turn leads to
\begin{equation*}
    \bb{Y}_r=\left[-\frac{M_{r+1}-1}{2},\frac{M_{r+1}-1}{2}\right]\subseteq \bb{U}_{r+1},
\end{equation*}
since $\bb{Y}_r\subseteq\bb{V}_r\subseteq \bb{U}_{r+1}$. Thus, we obtain that 
\begin{equation*}
    \abs{\bb{U}_{r+1}}\geq \Bigg|\left[-\frac{M_{r+1}-1}{2},\frac{M_{r+1}-1}{2}\right]\Bigg|=M_{r+1}.
\end{equation*}
\end{itemize}
Now, since $ \bb{U}_r \subseteq \bb{D}_r$ we have that
\begin{equation*}
\abs{\bb{D}_r}\geq \abs{\bb{U}_r}\geq M_r=\prod\limits_{i=1}^r\abs{\bb{U}_{\bb{G}_i}}.
\end{equation*}
Finally, recall that $\abs{\bb{U}_{\bb{G}_i}}=\mathcal{O}(\abs{\bb{G}_i}^2)$ for all $1\leq i\leq r$ and $N\leq \prod\limits_{i=1}^r\abs{\bb{G}_i}$, hence, $\abs{\bb{D}_r}=\mathcal{O}(M_r)=\mathcal{O}(N^2)$
which completes the proof. 

\bibliographystyle{IEEEtran}
\bibliography{IEEEabrv,refs}

\end{document}